\newtheorem{theorem}{Theorem}[section]
\newtheorem{definition}[theorem]{Definition}
\newtheorem{example}[theorem]{Example}
 \theoremstyle{remark}
\newtheorem{clm}{Claim}{\itshape}{\rmfamily}
\newtheorem{observation}[theorem]{Observation}
\newtheorem{remark}[theorem]{Remark}
\def\section{\@startsection {section}{1}{\z@}{-3.5ex plus -1ex minus
 -.2ex}{2.3ex plus .2ex}{\large\bf}}
\def\bfm#1{\mbox{\boldmath$#1$}}
\def\0{\bfm 0}
\newcommand{\qedd}{\hfill\rule{1 ex}{1 ex}}
\newcommand{\redcomment}[1]{\textcolor{black}{\textrm{#1}}}
\newcommand{\msc}[1]{\ensuremath{\text{\sc #1}}\xspace}
\newcommand{\rv}{\msc{r}}
\newcommand{\opt}{\msc{opt}}
\DeclareMathAlphabet{\mathpzc}{OT1}{pzc}{m}{it}
\def\bfm#1{\mbox{\boldmath$#1$}}
\begin{document}

%\mainmatter

\title{\bf Pricing in Social Networks with Negative Externalities \thanks{Supported in part by NNSF of China under Grant No. 11222109, 71101140 and 11471326,  973 Project of China under Grant No. 2011CB80800 and 2010CB731405, and  CAS Program for Cross \&
Cooperative Team of Science \& Technology Innovation.}}

%\titlerunning{Marketing Schedule of Products with Negative Externalities}

 \author{Zhigang Cao  \and Xujin Chen \and Xiaodong Hu
 \and
 Changjun Wang}
% \authorrunning{Z. Cao, X. Chen,  and C. Wang}
\date{Academy of Mathematics and Systems Science \\ Chinese Academy
of Sciences, Beijing 100190, China\\
${}$\\
\mailsa}

\maketitle

%\vspace{-4mm}
\begin{abstract}
We study the problems of pricing an indivisible product  to   consumers who are embedded in a {given} social network. {The goal is to maximize the revenue {of the seller}. We assume impatient consumers who buy the product as soon as the seller posts a price not greater than their {valuations} of the product.} {The %total
{product's value for a consumer} %of each consumer for this product
{is determined by two factors:} % consists of two components,
a fixed {consumer-specified} intrinsic value  and a variable externality that is exerted  from {the consumer's} %her
neighbors in a linear way}. {We study the scenario of negative  externalities, which captures many interesting situations, %including luxury or fashionable goods,
but is much less understood {in comparison} % compared
with its positive externality counterpart.}
{We assume complete information about the network, consumers' intrinsic values, and the {negative externalities. The maximum revenue is in general achieved by iterative pricing,}}  %, where the goal is to maximize the revenue. This constitutes a less-studied area of market pricing.
{%While single pricing is rather easy for this problem, we focus on the more challenging and more realistic iterative pricing,
which offers  impatient consumers a sequence of prices
over time.}
%This model falls into an insufficiently-studied branch of pricing in social networks.}

%{These constitute a less-studied {branch of pricing in social networks.}}% market pricing.

  We prove that it is NP-hard to find an optimal iterative pricing, even for unweighted tree networks {with uniform intrinsic values}.
  %{On the positive side,}
  {Complementary to the hardness result,} we design a  2-approximation algorithm for finding iterative pricing in
general weighted networks with {(possibly)} nonuniform intrinsic values. We show that, as an approximation to optimal iterative pricing, single pricing {works} rather well for many interesting cases, {such as forests, Erd\H{o}s-R\'enyi networks  and Barab\'asi-Albert networks}, {although its worst-case performance can be} arbitrarily bad.
\end{abstract}

\noindent{\bf Keywords:} Pricing, Algorithmic Game Theory, Social Networks, Negative Externalities, {Random Networks}

\newcounter{my}
\newenvironment{mylabel}
{
\begin{list}{(\roman{my})}{
\setlength{\parsep}{-0mm}
\setlength{\labelwidth}{8mm}
\setlength{\leftmargin}{8mm}
\usecounter{my}}
}{\end{list}}

\newcounter{my2}
\newenvironment{mylabel2}
{
\begin{list}{(\alph{my2})}{
\setlength{\parsep}{-1mm} \setlength{\labelwidth}{12mm}
\setlength{\leftmargin}{14mm}
\usecounter{my2}}
}{\end{list}}

\newcounter{my3}
\newenvironment{mylabel3}
{
\begin{list}{(\alph{my3})}{
\setlength{\parsep}{-1mm}
\setlength{\labelwidth}{8mm}
\setlength{\leftmargin}{10mm}
\usecounter{my3}}
}{\end{list}}

\section{Introduction}
{People  interact} with and influence each other to a degree that is beyond most of us can imagine. The magnitude of this
connection has been upgraded to a brandnew level by the proliferation of online SNS (Social Network Services, e.g. Facebook, Twitter, Google Plus,
and SinaWeibo). Numerous business opportunities are being incubated by this upgrading. Yet, its consequences are far from being fully unfolded or
understood, leaving many fascinating questions for scientists in a variety of disciplines to answer. One incredible fact in the SNS era is that we
are now able to know the complete network of  who is connected with whom. Network marketing and pricing, {with the assistance of {\em big data},}
could be much more precise and flexible than traditional {counterparts, and} are attracting increasing attention from both industry and academia. In this paper, we
 study, from an algorithmic point of view, {how  a monopolist seller should price to the} consumers connected by a known social network.

Consumption is never a completely private thing. As opposed to standard economic settings, the utilities that a consumer obtains from
consuming many {kinds of} goods, are not determined merely by his/her private needs and the functions and qualities of the goods, {but   also} greatly
affected by the consumptions of {his/her social network neighbors}.
 {For example, the} reason that we wear clothes is not only to cover ourselves from {cold}, but {usually} also to make
other people think that we look great and {unique}. %Girls use Chanel not only to make them smell good, but also to convey important information: {their} taste, wealth, social status etc., to eventually gain other's  respect.
 This social side of consumption is becoming more and more prominent with the
unification of E-commerce and SNS. It is now {very convenient for us} to share with our friends our shopping results. By clicking one more button at
the time we pay for the skirt online, all our Twitter friends know immediately the complete information of this skirt. This effect could be much
stronger and faster than face-to-face sharing. Our ladybros may think the skirt terrific and get one too, or {oppositely, they may prefer later a different} style to avoid outfit clash. The former case is typical {\em positive
externality}: the incentive that a consumer buys a product increases as more and more  {of} his/her social network neighbors buy the {product.} The latter opposite
scenario, {the incentive decreases when more neighbors have the product, is referred to as} {\em negative externality}, {which} is the focus of this paper. Positive externalities are prevalent in many aspects of the society and have
been extensively studies under various academical terms (%conforming behavior,
herd behavior, Matthew effect, %Bandwagon effect, positive feedback,
strategic complements, and %cascading,
viral marketing, {to name a few}). Negative externalities, in contrast, although widely {exist,  are much} less
studied.

\paragraph{\bf Pricing with negative externalities.} We concentrate on the negative externality {among} consumers of consuming a single kind of {product}, %good,
 {which is usually luxury or fashionable one.} %. This class of goods, referred to as {\em Veblen goods} in literature, are usually luxury or fashionable ones.
{An important} reason that {a consumer buys this product} is to showoff in front of {his/her}
friends (also referred to as {\em invidious consumption} in  {literature}). {Naturally, a consumer buys the product if  the price is not higher than his/her (total) value of the product, which is the sum of his/her constant {\em intrinsic value} and varying  {\em external value}. We propose and study the typical network pricing model, where the external value}   is  the  (weighted) number of people to whom {the consumer} can showoff (i.e. {his/her social network neighbors} who do
not possess this {product}). %good).
We study, to obtain a maximum revenue, {how   a monopolist seller should} price {such a product with negative externality} %such a Veblen good
to consumers connected by a {link-weighted social network, where the revenue is the total payment the seller receives, and the nonnegative integer link weights represent the influences between consumers. While, with the help of SNS, the knowledge of social network structures   and real-time externalities is  available, consumers' intrinsic values might be known in {complete information  scenarios, or partially known in incomplete information  scenarios.  This paper addresses the pricing problems for revenue maximization in complete information scenarios}. Our study falls {into} the framework of {\em uniform pricing}, where at any time point the same take-it-or-leave price is offered (posted) to all consumers who have not bought the product.
%The seller may adopt the static strategy of {\em single pricing} -- posting a single price all the time, or dynamic strategy of {\em iterative pricing} -- posting different prices sequentially at discrete time points.
{The seller adopts a strategy of {\em iterative pricing} -- posting different prices sequentially at discrete time points, to maximize her revenue (we assume that production costs are zero).}
{ We also assume that the consumers} {are {\em myopic} {(a.k.a. {\em impatient})} in the sense that they, when making purchase decisions, do not} take into account  their neighbors' {future} actions (which might change their external values of the product).}

\paragraph{\bf Contributions.} {Comparing with their positive counterparts, negative externalities possess more irregularity and pose more challenges for research on {product} {diffusion}, especially from the perspective of pricing. The intuitive hardness is   {confirmed by} the {following} theoretical intractability.
 \begin{itemize}
 \item By a reduction from the
  3SAT problem we show that finding an optimal iterative pricing is NP-hard even for the extremely simple
   case of unweighted tree network {with uniform} intrinsic values.
   \end{itemize}
  Complementary to {the hardness result,}
\begin{itemize}
\item We design a 2-approximation algorithm for iterative pricing in general weighted networks with
  general intrinsic values.
   %, which is derived by iteratively pricing with the maximum price at which some consumer is willing to buy.
    An exact $O(n^2)$-time algorithm is designed for unweighted split networks with uniform intrinsic values.
    \end{itemize}
 The  2-approximation algorithm  is remarkable for its simplicity and versatility to handle the most general problem {regardless of network topologies, link weights or intrinsic values}. {We also study single pricing as an approximation of iterative pricing, and {obtain} the following {negative and positive} results}
 \begin{itemize}
 \item  We prove that optimal single pricing can be arbitrarily worse (at a rate of $\ln\ln n$)
  than the optimal iterative pricing; and on the other hand, optimal single pricing provides nice approximations to
   the optimal iterative pricing for several well-known unweighted networks with  uniform intrinsic values:
    $(\ln n)$-approximation for general networks, 1.5-approximation for forest networks,
    $(1+\epsilon)$-approximation a.a.s for {Erd\H{o}s}-R\'enyi networks, and
     2-approximation a.a.s. for Barab\'asi-Albert networks (a.k.a. preferential attachment networks).
\end{itemize}
{This justifies the importance of the research of both iterative pricing and single pricing, whose relations in various scenarios  represent different {trade-offs} between revenue efficiency and algorithmic simplicity.}

\paragraph{Related work.} In the economics literature, the importance of network effects and network externalities in business began to attract serious attention around three decades ago (\cite{farrell1985,katz1985network}). Under the most popular frameworks,  network effects are assumed to be global instead of local. {Namely,} % To put it in graphical terms,
only complete networks are considered. Consumers may also act sequentially as in this paper, but are usually assumed to be completely rational in the way that they are able to forecast the decisions of later ones and make their purchasing decisions accordingly. There are quite a lot of followups, most of  which are beyond the scope of this paper. We refer the reader to \cite{radner2014dynamic} for a most recent development in this paradigm with relaxations of assumptions on consumers.

In the literature of computer science, network pricing stems mainly from the study of diffusion and cascading. {One of the most important differences between this strand of research and that of economics is arguably} that {network  structures are explicitly} and seriously addressed.} Over the last decade, under the framework of viral marketing, the algorithmic study of diffusing products with positive
externalities is especially fruitful for influence maximization, {see, e.g., \cite{c09,kkt03,mr07}.
%A number of recent extensions (see e.g., \cite{am11,gk12})
%have lifted the restriction of a single product, investigating the diffusion in the presence of multiple products.
%\cite{ckp12} recently studied the scheduling aspect of the diffusion problem on two products--finding an order of consumer purchase decision making to maximize the adoption of one
%product. In their model, the two competing products both have positive externalities and every consumer follows the majority of her social network
%neighbors when the externalities outweigh her own internal preference. A similar marketing schedule problem that deals with the promotion of two
%products with negative externalities and consumers follow the minority rule was  studied in \cite{ccw14}.
{To the best of our knowledge}, Hartline et al. {\cite{hms08}} {was the first to study} the diffusion problem} from a network pricing perspective.
%The  study of network pricing was started only very recently by {\cite{hms08}}.
They investigated marketing strategies for revenue
maximization with positive externalities. {Consumers are visited in a sequence (determined by the seller), and asked whether to buy or not under some
price (different consumers may receive different prices, referred to as differential pricing or discriminative pricing)}.  They {showed that for myopic consumers,} a reasonable approximation of the optimal marketing strategy can be achieved in a simple way of influence-and-exploit. While
complete information was assumed in \cite{hms08}, {Chen et al.} \cite{clstwz11} studied the incomplete information model with rational players and positive externalities. %This is the most related {work to ours for strategic consumers in II scenario}.
They  provided a polynomial
time algorithm that computes all the pessimistic (and optimistic) equilibria {and} %consequently
the optimal single price. When discriminative pricing is
allowed, they proved the NP-hardness {of optimal equilibrium computation}, and gave an FPTAS  for the case that consumers are already partitioned into groups such that those within
the same group must receive the same price. % {They also proposed a novel pseudo-eigenvector idea to avoid exact eigenvector computation, which turns out to be  {very} helpful in our algorithm design.}

Iterative pricing,  with a very limited literature, was   discussed {by Akhlaghpour et al. \cite{Akhlaghpour10} for positive
{externalities}. The authors studied} two iterative pricing models in which consumers are assumed to be {myopic}. In the {first model}, % Basic(k),
they {gave an} FPTAS for the optimal pricing strategy in the general case. In {the} {second model}, % Rapid(k),
they {showed} that the revenue maximization problem is inapproximable even {in} some special case. Their second model is quite similar to {ours}.   % {Within} each stage, consumers are rational (but not   forward-looking). When externalities work in the current stage, rather than work only for the following stages, they showed that the optimal pricing  for general submodular positive externalities is polynomially computable in the complete information case. They provided an FPTAS for the incomplete information case. They also show that the Rapid(k) problem is intractable and hard to approximate.

Although there is also a large literature in the field of classical
economics studying negative externalities ({under various terms,} e.g. the Veblen effect, the snob effect, the congestion effect etc.), explicit networks are rarely treated
seriously {as aforementioned}.   One of the classical papers in this strand \redcomment{is} \cite{jehiel1996not}, where the nulclear weapon selling problem was considered from the perspective of network effects. In the more recent computer science literature, compared with positive externalities, network pricing problems with negative externalities
are much less investigated. {Chen et al.} \cite{clstwz11} showed that when both positive and negative externalities are   allowed in their model, computing any approximate
equilibrium is PPAD-hard. %It is valuable to remark that their proof is valid for the case with both negative and positive externality.
However, the complexity status of the problem in the case with only negative externalities is still unknown. %It is noted in \cite{fklp13} that their results also carry over to a special kind of negative externality, which they name the Hipster game (i.e. the total value of a consumer on the product is positive if and only if she/he is the only one among her/his neighbors who possess it).
The only paper known to us that deals with the network pricing problem with negative externalities is
\cite{Bhattacharya2011} {by Bhattacharya et al.}, although their main focus is on equilibrium computation for given prices rather than pricing. The authors also considered
linear externalities, but a combination of single pricing, complete information {and strategic consumers. % (note that the counterpart in our study is in the II scenario)}.
They
showed that for any given price, the game that the consumers play is an exact potential game, and provided a set of hardness results. They
proved that finding the best equilibrium is NP-hard even for trees, and gave a 2-approximation algorithm for bipartite networks. Along a different line, Alon et al. \cite{amt13} {used} the term {``negative externality'' to} %The term negative externality is also used in \cite{amt13}, but by which they
mean the harm of discriminative pricing on consumers
(because discriminative pricing gives many consumers a feeling of inequality).

All the papers cited above assume that externalities are  only exerted  between consumers who buy the product. In contrast, for some products or sevices, e.g., public goods, % For those who do not purchase it, there is no externality. There is also a class of products or services, e.g. public goods, where
externalities are exerted  from purchasers to nonpurchasers. {Our paper is close to \cite{bk07} in the sense that both papers address
strategic substitutes (each player has less incentive to buy when more neighbors
purchase), although the network externalities are negative in our settings but positive
in their {settings of public goods}.} {In the computer science}, the public goods pricing problem was also studied by Feldman et al.~\cite{fklp13}.
%Also closely related to our II model is the work of Michal Feldman et al. ~\cite{fklp13}.
%Even though the public goods they studied are of positive externality, the purchasing decisions are of negative influence, and thus can be seen as one kind of %{``negative externalities''}.
Their work differs from ours in two main respects: (i) {In} our externality model, {a consumer's} utility is subtractive over the purchases made by this neighbors, whereas in their setting, purchases of neighbors are substitutes. (ii) Technically, they {related} the pricing problem (where externalities in their model are mathematically expressed in terms of products of neighbors actions) to a single-item auction problem, while we address the pricing problem (where externalities are expressed in terms of sums of neighbors actions) using iterative algorithmic approaches. {As noted by the authors \cite{fklp13},  their results  carry over to a
special kind of negative externality, where the valuation of a consumer
on the product is positive if and only if the consumer is the only one among her/his neighbors who possess the product.} {The {aforementioned} literature  are all on indivisible goods. The {network} pricing problems for  divisible goods with quadratic utilities functions have been studied in \cite{bq13,cbo12}.} Along with \cite{fklp13}, {a growing number of papers have been addressing} the network externality problem from the perspective of mechanism design and auction theory (e.g. \cite{bateni2013revenue,deng2011money,haghpanah2013optimal}).
\iffalse
{Our paper is also close to \cite{bk07} and \cite{fklp13} in the sense that all three papers address
strategic substitutes (each player has less incentive to buy when more neighbors
purchase), although the network externalities are negative in our settings but positive
in their public goods settings. The main similarities/differences are that (i) externalities
are linear in this paper and \cite{bk07}, but binary in \cite{fklp13} (i.e., the same externality is
imposed to a not-buying player when one or more neighbors purchase). (ii) Players
make decisions to buy or not in our paper and \cite{fklp13}, but determine how much to buy
in \cite{bk07}. (iii) \cite{bk07} focuses on social welfare from the typical economic perspective,
rather than the algorithmic perspective of pricing or equilibrium computation. (iv)
Technically, our paper is quite different from \cite{fklp13}, which relates the pricing problem
(where externalities are mathematically expressed in terms of ¡°products¡± of neighbors
actions) to a single-item auction problem, while we address the pricing problem
(where externalities are expressed in terms of ¡°sums¡± of neighbors actions) using
iterative algorithmic approaches. In particular, for computing BNE on bipartite graphs,
our algorithm relies on a couple of variants of best-response dynamics.}
\fi

\bigskip {The remainder of the paper is organized as follows. Section \ref{sec:model} gives the mathematical formulation of our iterative pricing model. Section \ref{sec:ip} is devoted to general iterative pricing, including NP-hardness (Section~\ref{sec:np}), 2-approximation for general weighted network with general intrinsic values (Section~\ref{sec:2apx}) and optimal pricing for unweighted split network with uniform intrinsic values (Section \ref{split}). Section \ref{sec:single} discusses the relation between single pricing and iterative pricing. Single pricing is shown to guarantee 1.5-approximation for forests (Section \ref{appforest}), near optimal for  Erd\H{o}s-R\'enyi networks (Section \ref{sec:er}),  $(2-\epsilon)$-approximation for Barab\'asi-Albert networks (Section \ref{sec:ba}), {and}  approximation with ratio within $[\ln\ln n,\ln n]$  for general networks (Section \ref{upplow}). Section \ref{sec:conclude} concludes the paper with remarks on future research.}

\section{{The model}}\label{sec:model}

 {Let $G=(V,E)$ be the given undirected {network} (without {self-loops}, and possibly associated with a nonnegative {integer} weight
function $w\in\mathbb Z_+^{V\times V}$),  where $V\equiv [n]$ is the set of $n$ consumers, {and} $E$ represents the   links between pairs of consumers.
%{We use $N(i)$ to denote the set of neighbors of $i\in V$. }
 %Suppose that nonnegative integral {\em weight} function $w\in\mathbb Z_+^{V\times V}$ satisfies
When the weight function $w\in\mathbb Z_+^{V\times V}$ is discussed, it is always assumed that {$w_{ij}=w_{ji}$ for all $i,j\in V$} and $w_{ij}=0$ if and only if $ij\not\in E$.
Given any consumer $i\in V$ and subset $S\subseteq V$ of consumers, we use $w_i(S)=\sum_{j\in S}w_{ij}$ to denote the sum of weights contributed to
consumer $i$ by those in~$S$. Clearly, only $i$'s neighbors can  {possibly} contribute.

%(for   myopic players with complete information)
{We name the  model {under investigation} as PNC ({\em Pricing with Negative externalities and Complete information}).
 Let $Q$, which {usually shrinks} % may   shrink
 as the {iterative pricing} proceeds, denote the set of consumers
who do not possess the product. Each consumer $i\in V$ has an intrinsic value $\nu(i)\in\mathbb R_+$, and her  {\em total {value}} of the product
equals $\nu(i)+w_i(Q)$. Initially $Q=V$. The PNC model proceeds as follows.}

{\begin{itemize}\item {\em Iterative pricing.} The monopolist seller announces prices $p_1,p_2,\dots,p_{\tau}$ sequentially at time
$1,2,\ldots,\tau$.

\vspace{-1mm}\item {\em Impatient consumers.} As soon as a price {is} announced, a consumer in $Q$ buys the product if and only if her {current} total {value} is greater than or
equal to the current price.
\vspace{-1mm}\item {\em Simultaneous {moves}.} We assume that,  for each newly announced price,  all consumers in $Q$  make their decisions (buying or not buying) simultaneously.
\end{itemize}}

Note that a consumer in $Q$ who does not purchase at current time $t$ under price $p_t$ may be willing to buy at a later time $t'>t$
under a lower price $p_{t'}<p_t$. For each $t=1,2,\ldots,\tau$, let $B(p_t)$ denote the set of consumers who buy the product at price $p_t$, {(i.e., at time $t$, or in the $t$-th {\em round})}. We use
$\rv(\mathbf p)$ to denote the revenue derived from $\mathbf p=(p_1,p_2,\ldots,p_{\tau})$, {i.e.,} % then
 $\rv(\mathbf p)=\sum_{t=1}^{\tau}p_t\cdot|B(p_t)|$. {In case of $\mathbf p=(p_1)$, we often write $\rv(\mathbf p)$ as $\rv(p_1)$.}
The PNC problem is to find a pricing sequence $\mathbf p=(p_1,p_2,\ldots,p_{\tau})$ such that $\rv(\mathbf p)$ is maximized, where both {the length $\tau$  and the entries
$p_1,p_2,\ldots,p_{\tau}$ of the sequence} are variables to be determined.

\section{{General iterative pricing}}\label{sec:ip}

{In this section, we study the PNC model in the most general setting where no restriction is imposed to the length  of the pricing sequence.}
\subsection{{NP-hardness}}\label{sec:np}

{We prove that finding an optimal pricing sequence for the PNC model is NP-hard, even when the intrinsic values are all zero, link weights are unit, and the network is a tree. Throughout this subsection, we assume that the intrinsic values of all consumers are zero.}

{We begin with some preliminaries that will be used in the formal proofs. Let  $\mathbf{p}=(p_1,p_2,\ldots,p_{\tau})$ be a pricing sequence.}  For any $t\in [\tau]$ and $i\in V$, we use $\nu_t(i,\mathbf p)$ to denote the {(total)} value of the product at time $t$ {(in the $t$-th round)  for
consumer $i$ during the selling/purchase process.  Since intrinsic value $\nu(i)$ is zero by assumption,} %. We assume that
$\nu_t(i,\mathbf p)$ is the sum of weights from $i$'s neighbors who have not purchased yet in the previous rounds. %That is, \begin{equation*}\nu_t(i,\mathbf p)=w(V\setminus \overline{B}_{t-1}).\end{equation*}

\begin{observation}During the selling process, the value of the product for each consumer $i$ does not increase, i.e. $\nu_{t+1}(i,\mathbf p)\leq \nu_t(i,\mathbf p)$ {for all $t=1,2,\ldots,\tau-1$.}\end{observation}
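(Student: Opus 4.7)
The plan is to formalize the intuition that the pool of not-yet-purchased consumers can only shrink over time, and since the external value is a sum of nonnegative weights over this pool, it can only decrease.

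First, I would fix notation. Let $Q_t \subseteq V$ denote the set of consumers who have not purchased the product at the start of round $t$, so $Q_1=V$ and $Q_{t+1}=Q_t\setminus B(p_t)$ for each $t$. Since purchases are final (a consumer who buys leaves $Q$ and never re-enters), the monotonicity $Q_{t+1}\subseteq Q_t$ is immediate from the definition of the selling process.

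Next, I would re-express $\nu_t(i,\mathbf p)$ in this notation. By the definition given just before the observation, $\nu_t(i,\mathbf p)$ equals $\nu(i)$ plus the sum of weights $w_{ij}$ over all neighbors $j$ of $i$ that still lack the product at time $t$; equivalently (using $\nu(i)=0$ under the subsection's standing assumption, and $w_{ii}=0$ since $G$ has no self-loops),
\[
\nu_t(i,\mathbf p)=w_i(Q_t)=\sum_{j\in Q_t}w_{ij}.
\]

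Finally, comparing the two successive rounds, $Q_{t+1}\subseteq Q_t$ together with $w_{ij}\ge 0$ for all $j\in V$ yields
\[
\nu_{t+1}(i,\mathbf p)=\sum_{j\in Q_{t+1}}w_{ij}\le\sum_{j\in Q_t}w_{ij}=\nu_t(i,\mathbf p),
\]
which is the claimed inequality. There is no genuine obstacle here: the statement is essentially a bookkeeping remark, and the only thing to be careful about is that the observation is written for the zero-intrinsic-value regime of this subsection; the same argument with an additive $\nu(i)$ on both sides works verbatim in the general case.
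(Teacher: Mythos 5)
Your proof is correct and matches the intended (unstated) justification: the paper leaves this Observation without proof precisely because it reduces to the bookkeeping fact that $Q_{t+1}\subseteq Q_t$ and the weights are nonnegative, which is exactly what you wrote out.
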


Given %an irredundant pricing sequence $\mathbf{p}=(p_1,p_2,\cdots, p_{\tau})$ and
a subset of nodes $S\subseteq V$, we use $\rv_{t,S}(\mathbf p)$ to denote the
revenue from these consumers until time $t$, {i.e., $\rv_{t,S}(\mathbf p)=\sum_{i=1}^{t}|S\cap B(p_i)|p_i$.} For brevity, we also write $\rv_{\tau,S}(\mathbf p)$ as $\rv_{S}(\mathbf p)$.  {In particular, we have $\rv_V(\mathbf p)=\rv(\mathbf p)$.}

\begin{definition}  {We call  pricing sequence $\mathbf{p}=(p_1,p_2,\ldots,p_{\tau})$ {\em irredundant} if for each
$i\in[\tau]$, there is at least one consumer who purchases under   price $p_{i}$}.\end{definition}

{Every pricing sequence $\mathbf{p}$ is ``equivalent'' to a unique irredundant pricing sequence $\mathbf{p}'$ which is derived from $\mathbf{p}$ by removing all prices under which no consumers purchase. Clearly, the equivalent pricing sequences bring  about the same revenue $\rv(\mathbf{p})=\rv(\mathbf{p}')$. This allows us to focus on irredundant pricing sequences.}

\begin{observation} \label{distinct}If pricing sequence $\mathbf{p}=(p_1,p_2,\cdots,p_{\tau})$ is irredundant, {then}  %{we may assume that}
it is decreasing, i.e. $p_1>p_2>\cdots>p_{\tau}$.
\end{observation}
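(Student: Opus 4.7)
The plan is to prove Observation~\ref{distinct} by contradiction. Suppose the irredundant sequence $\mathbf{p}=(p_1,p_2,\ldots,p_{\tau})$ is not strictly decreasing; then there is some index $t\in[\tau-1]$ with $p_t\le p_{t+1}$. I will derive a contradiction by showing that nobody can buy in round $t+1$, which violates irredundancy.

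First I would pin down the status of the consumers who are still in $Q$ at the beginning of round $t+1$. Any such consumer $i$ declined to purchase in round $t$, so by the decision rule her total value in round $t$ must satisfy $\nu_t(i,\mathbf{p})<p_t$; otherwise she would have bought at time $t$. Next I would invoke the preceding observation that the value of the product for each consumer is non-increasing throughout the selling process, i.e.\ $\nu_{t+1}(i,\mathbf{p})\le \nu_t(i,\mathbf{p})$. Chaining these inequalities with the hypothesis $p_t\le p_{t+1}$ yields
\[
\nu_{t+1}(i,\mathbf{p})\;\le\;\nu_t(i,\mathbf{p})\;<\;p_t\;\le\;p_{t+1}
\]
for every consumer $i$ still in $Q$ at round $t+1$. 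Hence no consumer in $Q$ meets the purchase threshold at price $p_{t+1}$, so $B(p_{t+1})=\emptyset$, contradicting the assumption that $\mathbf{p}$ is irredundant.

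There is essentially no obstacle here: once the monotonicity of $\nu_t(i,\mathbf{p})$ in $t$ is in hand (which the paper has just stated as an observation), the argument is a one-line chain of inequalities combined with the definition of irredundancy. The only subtlety worth highlighting in the write-up is the strict inequality $\nu_t(i,\mathbf{p})<p_t$ for non-buyers, which comes directly from the ``greater than or equal to'' tie-breaking convention in the purchase rule; this is what lets the argument go through in both the case $p_t=p_{t+1}$ and the case $p_t<p_{t+1}$ uniformly.
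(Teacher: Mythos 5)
Your proposal is correct and is exactly the argument the paper relies on implicitly (the observation is stated without proof). The chain $\nu_{t+1}(i,\mathbf p)\le\nu_t(i,\mathbf p)<p_t\le p_{t+1}$ for every consumer remaining in $Q$ at round $t+1$, using the monotonicity observation and the strict-inequality criterion for non-buyers, immediately gives $B(p_{t+1})=\emptyset$ and the contradiction with irredundancy.
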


Assume that $\mathbf{p}=(p_1,p_2,\ldots, p_{\tau})$ is an irredundant pricing sequence.  Since by Observation \ref{distinct} the entries of $\mathbf{p}$
are all distinct, we also {view} $\mathbf{p}$ as a set {$\{p_1,p_2,\ldots,p_{\tau}\}$}, and use the symbol  $p_i\in \mathbf{p}$ to mean that {$p_i$ is an entry of}  %there is price $p_i$ in
the pricing sequence $\mathbf p$.
 For all $1\leq t\leq \tau$, {define $B_t(\mathbf{p})=\cup_{i=1}^{t}B(p_i)$} %we use $A_t$ to denote the set of consumers who purchase in the $t$-th round (i.e. at price $p_i$, we may also say that at time $t$). Define $B_t=\cup_{i=1}^{t}A_i$
to be the set of
consumers who have purchased in the first $t$ rounds. For {notational convenience, we set  $B_0(\mathbf{p})=B(p_0)=\emptyset$.} Recall that in the PNC model}  we have assumed that consumers are all impatient in the sense that they will definitely purchase as long as
the current price is lower than or equal to their current values. {As $\mathbf p$ is irredundant,  $B(p_1),B(p_2),\ldots,B(p_{\tau})$} can be computed in a recursive
  way: {$B(p_t)=\{i\in V: w_i(V\setminus B_{t-1}(\mathbf p))\geq p_t\}$, $t=1,2,\ldots,\tau$.}

\begin{definition} {A pricing sequence $\mathbf{p}=(p_1,p_2,\ldots,p_{\tau})$ is called {\em normal} if it is irredundant, and for any $i\in[\tau]$ and any $\epsilon>0$,
 increasing $p_i$ to $p_i+\epsilon$   (other prices {remain} the same) changes the set of consumers who purchase at the $i$-th round.}
\end{definition}

{Clearly, all entries of a normal pricing sequence are integers. Given an irredundant pricing sequence $\mathbf{p}=(p_1,p_2,\ldots, p_{\tau})$ together with $B(p_1),B(p_2),\ldots,B(p_{\tau})$, one can easily compute a normal pricing sequence $\mathbf{p}'=(p'_1,p'_2,\ldots, p'_{\tau})$, which is ``equivalent'' to $\mathbf p$ in the sense that $B(p_t')=B(p_t)$ for all $t\in[\tau]$, as follows: $p'_t=\min\{w_i(V\setminus B_{t-1}(\mathbf p)):i\in B(p_t)\}$, $t=1,2,\ldots,\tau$. It is clear that $\rv(\mathbf p')\ge\rv(\mathbf p)$. The following observation enables us to concentrate on normal pricing sequences in our NP-hardness proofs.}

\begin{observation}There is an optimal pricing sequence that is normal. %{What's more, from any pricing sequence, we can easily derive a normal one that is equivalent to it.}
\label{o3}
\end{observation}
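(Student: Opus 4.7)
The plan is to start from any optimal pricing sequence and transform it, in a revenue-non-decreasing way, into a normal one, using exactly the recipe sketched in the paragraph just above the observation. First I would reduce to irredundant sequences: given any optimal $\mathbf{p}$, its irredundant reduction yields the same revenue, so I may assume the starting optimum $\mathbf{p}=(p_1,\ldots,p_\tau)$ is irredundant, hence strictly decreasing by Observation~\ref{distinct}. I would then define $\mathbf{p}'=(p'_1,\ldots,p'_\tau)$ by
\[
p'_t \;=\; \min\{\,w_i(V\setminus B_{t-1}(\mathbf{p})):i\in B(p_t)\,\},\qquad t=1,2,\ldots,\tau,
\]
and verify four properties: (i) $\mathbf{p}'$ is irredundant; (ii) $B(p'_t)=B(p_t)$ for all $t$; (iii) $p'_t\ge p_t$ for all $t$; (iv) $\mathbf{p}'$ is normal.

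For (ii), I would use induction on $t$. The inductive step only needs the identity $B_{t-1}(\mathbf{p}')=B_{t-1}(\mathbf{p})$, since then a consumer $i$ buys at time $t$ under $\mathbf{p}'$ iff $w_i(V\setminus B_{t-1}(\mathbf{p}))\ge p'_t$. If $i\in B(p_t)$, then $w_i(V\setminus B_{t-1}(\mathbf{p}))\ge p'_t$ by definition of the minimum, so $i$ still buys; if $i\notin B(p_t)$, then $w_i(V\setminus B_{t-1}(\mathbf{p}))<p_t\le p'_t$ (using (iii)), so $i$ still does not buy. Property (iii) is immediate from $p'_t=\min_{i\in B(p_t)}w_i(V\setminus B_{t-1}(\mathbf{p}))$ and the fact that each $i\in B(p_t)$ satisfies $w_i(V\setminus B_{t-1}(\mathbf{p}))\ge p_t$. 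With (ii) and (iii) established, I get $\rv(\mathbf{p}')=\sum_{t}p'_t|B(p_t)|\ge\sum_{t}p_t|B(p_t)|=\rv(\mathbf{p})$, so $\mathbf{p}'$ is also optimal.

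For irredundancy in (i), I must show $p'_1>p'_2>\cdots>p'_\tau$ and each $B(p'_t)\ne\emptyset$. The latter follows from (ii) and irredundancy of $\mathbf{p}$. The former follows from a chain of inequalities: for any $i\in B(p_{t+1})$, the fact that $i\notin B(p_t)$ gives $w_i(V\setminus B_{t-1}(\mathbf{p}))<p_t$, and monotonicity of $w_i(\cdot)$ (Observation 3.3) yields $w_i(V\setminus B_t(\mathbf{p}))\le w_i(V\setminus B_{t-1}(\mathbf{p}))<p_t\le p'_t$, so $p'_{t+1}<p'_t$. Finally for (iv), let $i^\star\in B(p_t)$ attain the minimum defining $p'_t$; raising $p'_t$ to $p'_t+\epsilon$ (keeping other entries unchanged) preserves $B_{t-1}(\mathbf{p}')$ and causes $w_{i^\star}(V\setminus B_{t-1}(\mathbf{p}'))=p'_t<p'_t+\epsilon$, so $i^\star$ drops out of the $t$-th round, which strictly changes the buyer set.

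I do not anticipate a serious obstacle: the main delicacy is bookkeeping the induction in (ii) together with (iii) (they depend on each other at each round, so the induction has to carry the joint statement $B_{t-1}(\mathbf{p}')=B_{t-1}(\mathbf{p})$ and $p'_t\ge p_t$ forward in lockstep), and making sure no ``new'' buyer slips into round $t$ under the higher price $p'_t$, which is handled precisely by the strict inequality $w_i(V\setminus B_{t-1}(\mathbf{p}))<p_t\le p'_t$ for $i\notin B(p_t)$.
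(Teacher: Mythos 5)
Your proposal is correct and takes essentially the same route the paper sketches in the paragraph immediately preceding the observation (reduce to an irredundant sequence, then apply the formula $p'_t=\min\{w_i(V\setminus B_{t-1}(\mathbf p)):i\in B(p_t)\}$ and check equivalence and normality). The only thing worth noting is that (iii) is actually independent of (ii) since $p'_t$ is defined purely from $\mathbf p$, so the induction is a bit lighter than your closing remark suggests; also, in the step for $i\notin B(p_t)$ you should separately dispose of the case $i\in B_{t-1}(\mathbf p)$ (where $i$ has already bought under $\mathbf p'$ by the inductive hypothesis), but these are cosmetic points.
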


{The NP-hardness for the PNC model is proved  by reduction from the 3SAT problem. The input of the 3SAT problem are $n$ boolean variables $x_1,x_2,\cdots,x_n$, and $m$ clauses
{$c^j=(x^{j1}\vee x^{j2}\vee x^{j3})$}, $ 1\leq j\leq m$,} where $x^{j\ell}$ is a literal taken from $\{x_1,x_2,\cdots,x_n,\neg x_1,\neg
x_2,$  {$\cdots, \neg x_n\}$}, $1\leq
j\leq m, 1\leq \ell\leq 3$. For convenience, we write $x^{j\ell}\in c^j$. {The 3SAT problem is to determine if there is a satisfactory truth assignment to the $n$ variables that makes all $m$ clauses evaluate to TRUE.} %We consider the restricted version of the 3SAT problem which is known to be NP-hard, where for each $i\in[n]$, there are at most three clauses that contains either $x_i$ or $\neg x_i$.
To avoid triviality, we assume that $m\ge3$, and for each $i\in[n]$, there exist $j,j'\in[m]$ such that $x_i\in c^j$ and $\neg x_i\in c^{j'}$.

{Next, we prove the NP-hardness for   the weighted case with general network structures. The proof, which highlights the high level idea in our later proof to handle the unweighted case with tree structures, turns out to be much easier to understand.}

\begin{theorem}\label{easynp}
In the PNC model, computing {an}  optimal pricing sequence is NP-hard, even when all the intrinsic values are zero. % {(note that link weights are in general not unit and there is no requirement for the network structure)}.
\end{theorem}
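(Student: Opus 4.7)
The plan is to reduce from 3SAT. Given an arbitrary 3SAT instance with variables $x_1,\ldots,x_n$ and clauses $c^1,\ldots,c^m$, I will construct a weighted graph $G$ (with all intrinsic values zero, as allowed by the theorem statement) together with a revenue threshold $R^*$, in such a way that the maximum revenue of any iterative pricing on $G$ is at least $R^*$ if and only if the 3SAT instance is satisfiable.

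The graph $G$ has three kinds of nodes. For each variable $x_i$, I introduce two ``literal'' nodes $a_i$ and $b_i$ representing $x_i$ and $\neg x_i$, tied together through an auxiliary variable gadget whose internal edges carry a large weight $M$ to be fixed later. For each clause $c^j$, I introduce a clause node $v_j$ and connect it by unit-weight edges to the three literal nodes corresponding to the literals appearing in $c^j$. The intended encoding is: in the optimal pricing, at a specific ``critical'' round, exactly one of each pair $\{a_i,b_i\}$ is still in $Q$; declaring $x_i=\text{TRUE}$ iff $a_i\in Q$ then defines a truth assignment, and clause $v_j$ is satisfied iff at least one of its three literal neighbors still belongs to $Q$ at that round.

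The reduction then argues that the optimal pricing has a three-phase structure. The first phase uses high prices (on the order of $M$) and ``resolves'' each variable gadget by selling exactly one literal of each pair; by the symmetry of the variable gadget, the revenue of this phase is a fixed constant $R_{\text{fixed}}$ independent of which literal is chosen. The second phase posts the price~$1$: clause node $v_j$ has value equal to the number of its literal neighbors still in $Q$, so $v_j$ buys iff at least one such neighbor remains, i.e., iff the encoded assignment satisfies $c^j$. Hence every template-conforming pricing yields revenue $R_{\text{fixed}}$ plus the number of satisfied clauses, and choosing $R^*:=R_{\text{fixed}}+m$ makes the equivalence with 3SAT satisfiability immediate.

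The main obstacle is verifying that no pricing can beat this template. Using the preceding observations I may restrict attention to normal (hence strictly decreasing, integer-valued) pricing sequences. The key local-exchange arguments I would need are the following: selling both literals of a pair $\{a_i,b_i\}$ in the first phase gains only a tiny extra amount but destroys the phase-two contribution of every clause containing $x_i$ or $\neg x_i$, which is a net loss once $M$ is chosen much larger than $m$; selling neither literal forfeits the per-variable phase-one contribution outright; and inserting any intermediate price strictly between $M$ and $1$ can be re-routed into one of the two template rounds without increasing the total revenue. Calibrating the weights inside the variable gadget so that all of these strict inequalities hold simultaneously is where the delicate technical work lies and is the heart of the construction; once this is done, the reduction is clearly polynomial in $n$ and $m$, which completes the proof of NP-hardness.
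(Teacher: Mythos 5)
Your overall plan follows the paper's blueprint — variable gadgets carrying large weights, clause gadgets carrying small weights, a two-phase pricing that resolves variables first and clauses second, and a threshold $R^*$ separating satisfiable from unsatisfiable instances — but the construction as described has a genuine and fatal gap. You introduce a \emph{single} large weight $M$ for the internal edges of \emph{every} variable gadget and declare that the gadget is ``symmetric''. If all $n$ variable gadgets are structurally identical (up to the negligible unit-weight ties to clause nodes, of size at most $m\ll M$), then at any price $p$ on the order of $M$ the set of gadget nodes whose value exceeds $p$ is the same in every gadget. Since in this model all eligible consumers purchase simultaneously, the pricing sequence necessarily makes the \emph{same} choice of literal in every pair $\{a_i,b_i\}$; there is no mechanism by which the seller can sell $a_1$ in gadget~1 but $b_2$ in gadget~2. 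Consequently the ``if'' direction of your equivalence fails: given a satisfying assignment that mixes TRUE and FALSE, you cannot exhibit a pricing that realizes it, so $\opt(G)\ge R^*$ need not hold for satisfiable instances.

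The paper resolves exactly this issue by assigning each variable gadget its own private, non-overlapping price window via the geometric hierarchy $a_1>5a_2>\cdots>5^{n-1}a_n>5^n a>5^{n+1}mn$ in (\ref{e01}). Within the window $[2a_i,10a_i]$ the seller has a binary choice, $\{10a_i,2a_i\}$ versus $\{6a_i\}$, both yielding revenue exactly $24a_i$ but leaving different literals unsold (Claim~\ref{c01}), and because the windows are disjoint these choices compose into an arbitrary truth assignment. This hierarchy is not a minor calibration detail you can defer; it is the central device that decouples the variables, and without it your template cannot encode an arbitrary assignment. Two smaller remarks: the paper reduces from 3-OCC-3SAT rather than general 3SAT precisely so that the leakage from clause edges into the variable windows is bounded by $3$, which is what makes the case analysis in Claim~\ref{c01}(iii) close; and its clause gadget uses two auxiliary nodes and the clause-phase price $2a+1$ (not $1$), so that the ``clause satisfied'' bonus is amplified to $6a+3$ and dominates the $O(1)$ noise a pricing could harvest from straddling the window boundaries. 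Your bare clause node with price $1$ would leave only a $+1$ margin, which the slack in the variable gadgets could easily overwhelm.
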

\begin{proof}
{Our reduction here uses} a slightly restricted version of the 3SAT problem,  {the 3-OCC-3SAT problem},  which is known to be NP-hard, where for each $i\in[n]$, there are at most three clauses that contains either $x_i$ or $\neg x_i$.

For any instance $I$ of {the 3-OCC-3SAT problem}, we construct an instance $P$ of the network pricing problem  {on network $G=(V,E)$} as follows. There are a total of $5n+3m$
nodes:\begin{itemize}\item For each variable $x_i$, there is a gadget $V_i$. Each pair of literals $x_i$ and $\neg x_i$ are simulated by two nodes
(with {the  names unchanged}), respectively, and three auxiliary ones, $y_{i1},y_{i2},y_{i3}$. See the left part of Figure~1 for the links and {their weights.}
\vspace{-1mm}\item For each clause $c^j=(x^{j1}\vee x^{j2}\vee x^{j3})$, there is a gadget $C^j$.
The clause is simulated by a node $c^j$ and two auxiliary ones, $d^j$ and $e^j$. See the right part of Figure 1 for the links and weights among them.

\vspace{-1mm}\item A literal node ($x_i$ or $\neg x_i$) is linked to a clause node $c^j$ if and only if this literal appears in the clause, and the weight {of the link} is 1.

%\item All the other edges that are not described have weights zero.
\vspace{-1mm}\item {The integer parameters} in the weights satisfy
{\begin{equation}a_1>5 a_2> \cdots >5^{i-1}a_i>\cdots>5^{n-1}a_n>5^n a>5^{n+1} mn.\label{e01}\end{equation}}\end{itemize}
%{[zhigang: note that the last number is changed from $5mn$ to $5^{n+1} mn$.]}

\begin{figure}[h]
\begin{center}
\includegraphics[width=10cm]{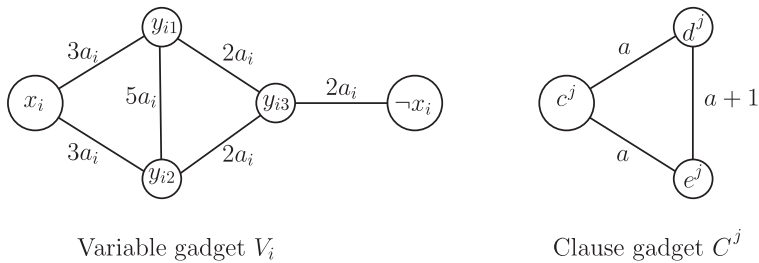}
\caption{%Variable gadget $V_i$ (left) and clause gadget $C^j$ (right).
Literal nodes and clause nodes are represented by larger circles, while
auxiliary nodes are represented by smaller ones.} \label{Figure.1}
\end{center}\end{figure}

Obviously, the above construction can be done in polynomial time. Observe first that all the consumers in the variable gadgets {are incident {with} links of weights much larger than the total weight of links that are incident with any clause consumer.} %have much larger weights than the clause ones.
This structure permits us to consider the variable consumers before the clause ones.
In the rest of this proof, we may
abuse the notations $V_i$ and $C^j$ a little bit to represent both the gadgets and the corresponding node sets, respectively.

 {Due to Observation \ref{o3}, we only consider normal pricing sequences.} Given any  {normal} pricing sequence $\mathbf{p}=(p_1,p_2,\ldots, p_{\tau})$, let {$\xi$} be the first time that the price is equal to or lower than $2a+3$, i.e.,
\begin{equation*}\xi=\min\{t: p_t\in \mathbf{p},p_t\leq 2a+3\}.\end{equation*}

Note that before time $\xi$, no consumer in the clause gadgets has purchased, i.e., \begin{equation*}B_{\xi-1}(\mathbf p) \cap (\cup_{j=1}^m
C^j)=\emptyset.\end{equation*}

The key idea of our proof is simple: we shall show that for each pair of nodes $x_i$ and $\neg x_i$, we can sell the product to one and only one of them, and this makes no difference {for the revenue} at all before time $\xi$ {(see Claim \ref{c01} below)}. The only difference that the choice between $x_i$ and $\neg x_i$ makes is upon the clause gadget nodes after time $\xi$. Our construction makes these choices really hard because they correspond to a (possible) solution of the 3-OCC-3SAT problem.

 %The following claims are critical.
For any $i\in[n]$, we note that {$w_v(V)\le10a_i$} for all $v\in V_i$; thus no consumer in $V_i$ purchases when the price is above $10a_i$. %Furthermore, the following two claims can be easily checked.
\begin{clm}\label{c01}\label{c02}\label{c03}
For all $i=1,2,\ldots,n$,
 \begin{mylabel}
 \item if $\mathbf{p}\cap [2a_i,10a_i]=\{10a_i,2a_i\}$, then $\rv_{\xi-1,V_i}(\mathbf p)=24a_i$, $x_i\notin B_{\xi-1}(\mathbf p) $ and $\neg x_i\in B_{\xi-1}(\mathbf p) $;

%To prove the above claim, note first that when the price is above $6p_i$, no consumer in $V_i$ purchases, because the largest total weight in the whole network of any consumer in $V_i$ is $6a_i$ (reached at $y_{i1}$ and $y_{i2}$). When the price goes down to $6a_i$, consumers $y_{i1}$ and $y_{i2}$ purchase, making the value of the product to consumers $y_{i3}$ and $\neg x_i$ be  $2a_i$ and at most $2a_i+m$, respectively. And the value for $x_i$ is currently at most $m$, which is rather small  (compared with the $a_i$ and $a$). Therefore, consumer $x_i$ will not buy at or before time $\xi-1$. Hence $x_i\notin B_{\xi-1}(\mathbf p)$ is valid.  When the price goes further down to $2a_i$, consumers $\neg x_i$ and $y_{i3}$ purchase. Hence  $\neg x_i\in B_{\xi-1}(\mathbf p)$ is also valid. This proves the claim.

\item if $\mathbf{p}\cap [2a_i,10a_i]=\{6a_i\}$, then $\rv_{\xi-1,V_i}(\mathbf p)=24a_i$, $x_i\in B_{\xi-1}(\mathbf p) $ and $\neg x_i\notin B_{\xi-1}(\mathbf p) $;

%Since $\mathbf{p}\cap (4a_i,6a_i]=\emptyset$, nothing happens to the consumers in $V_i$, until the price goes down to $4a_i$. At that time, consumers $x_i,y_{i1},y_{i2},y_{i3}$ all purchase, making the value of consumer $\neg x_i$ at most $m$. Using the same logic as in the proof to Claim \ref{c1}, we know that Claim \ref{c2} is also valid.

\item $\rv_{\xi-1,V_i}(\mathbf p)\leq 24a_i$, and the equality holds if and only if $\mathbf{p}\cap [2a_i,10a_i]\in \{\{10a_i,2a_i\},\{6a_i\}\}$.\end{mylabel}\end{clm}
{Statements (i) and (ii) are easily checked. It remains}  to prove %First of all, it can be checked trivially that when $\mathbf{p}\cap [2a_i,6a_i]\in \{\{2a_i,6a_i\},\{4a_i\}\}$, $\rv_{\xi-1,V_i}(\mathbf p)=16a_i$. It remains to prove that in all the other cases,
$\rv_{\xi-1,V_i}(\mathbf p)<24a_i$ {if} $\mathbf{p}\cap [2a_i,10a_i]\notin \{\{10a_i,2a_i\},\{6a_i\}\}$. Note that $\rv_{\xi-1,V_i}(\mathbf p)<24a_i$ is trivial if $\mathbf{p}\cap [2a_i,10a_i]=\emptyset$. Hence we may assume that there exists a maximum price  $\hat{p}\in \mathbf{p}\cap[2a_i,10a_i]$. %i.e. \begin{equation*}\hat{p}=\max\{p:p\in \mathbf{p}, 2a_i\leq p\leq 6a_i\}.\end{equation*}
By {normality} of $\mathbf{p}$, we know that $\hat{p}\in \{10a_i,6a_i+h_i,6a_i,2a_i+h'_i,2a_i\}$, where {$h_i$ and $h_i'$ are total weights that $x_i$ and $\neg x_i$ get from clause gadgets, respectively. Hence} $h_i+h_i'\le3$ {(recall the definition of 3-OCC-3SAT)}. For the case that $\hat{p}\le2a_i+h'_i$, it is obvious that
$\rv_{\xi-1,V_i}(\mathbf p)<24a_i$. We are left to the analysis of the remaining three cases, which will {establish Statement~(iii).} \begin{itemize}
\item $\hat{p}=10a_i$. It follows from $\mathbf{p}\cap [2a_i,10a_i]\neq\{2a_i,10a_i\}$  that $\mathbf{p}\cap [2a_i,10a_i]=\{10a_i\}$,  because the only price in $[2a_i,10a_i]$ that is smaller than $10a_i$ and  makes $\mathbf{p}$ normal is $2a_i$.  This gives $\rv_{\xi-1,V_i}(\mathbf p)=20a_i<24a_i$.

\vspace{-1mm}\item $\hat{p}=6a_i+h_i$. The
normality of $\mathbf{p}$ implies {$\mathbf{p}\cap [2a_i,10a_i]\in \{\{6a_i+h_i,2a_i+h'_i\},\{6a_i+h_i,2a_i\},\{6a_i+h_i\}\}$} and  {hence} $\rv_{\xi-1,V_i}(\mathbf p)\le3(6a_i+h_i)+4a_i<24a_i$.
\vspace{-1mm}\item {$\hat{p}=6a_i$}. An argument similar to  {the previous case shows that} $\rv_{\xi-1,V_i}(\mathbf p)\le3\times6a_i+4a_i<24a_i$.\qedd\end{itemize}

\begin{clm}For each $1\leq j\leq m$, $\rv_{C^j}(\mathbf p)\leq 6a+3$, and the {equality} holds if and only if
$D^j \setminus B_{\xi-1}(\mathbf p) \neq \emptyset$ and $\mathbf{p}\cap [2a,2a+3]=\{2a+1\}$, where
$D^j=\{x^{j1},x^{j2},x^{j3}\}$.\label{c04}\end{clm}

It is {easy to check} that when {$D^j \setminus B_{\xi-1}(\mathbf p) \neq \emptyset$} and
$\mathbf{p}\cap [2a,2a+3]=\{2a+1\}$, the equality $\rv_{C^j}(\mathbf p)=6a+3$ holds. {We prove}   $\rv_{C^j}(\mathbf p)<6a+3$ in the other cases. When
{$D^j \subseteq B_{\xi-1}(\mathbf p)$ or $\mathbf{p}\cap
(2a,2a+3]=\emptyset$}, it is easy to see that  $\rv_{C^j}(\mathbf p)$ is at most $2a\times 3=6a$. So we {only need} to discuss the case of $D^j \setminus B_{\xi-1}(\mathbf p) \neq \emptyset$   {and $\mathbf{p}\cap
(2a,2a+3]\ne\emptyset$}.

{Let $\hat{p}\in \mathbf{p}\cap(2a,2a+3]$ be maximum.} % If $\hat{p}$ does not exist at all, i.e., $\mathbf{p}\cap [2a,2a+3]=\emptyset$, then $\rv_{C^j}(\mathbf p)<6a+3$ is trivial. So we suppose that $\hat{p}$ exists.
 By {normality} of $\mathbf{p}$, we know that
{$\hat{p}\in \{2a+1,2a+2,2a+3\}$.} Since $w(d^j)=w(e^j)=2a+1$, it can be seen that
$\rv_{C^j}(\mathbf p)<6a+3$ holds for $\hat{p}\in \{2a,2a+2,2a+3\}$. %On the other hand, {$\mathbf{p}\cap [2a,2a+3]=\{2a+1\}$} is implied by $\hat{p}=2a+1$.
So {Claim \ref{c04}} is valid. \qedd

\medskip We are now ready to prove the close relation between {the 3-OCC-3SAT instance $I$ and the PNC instance~$P$.} Let {$\opt(P)$} be the optimal objective value of $P$. Define
\begin{equation*}L=\sum_{i=1}^{n}24a_i+m(6a+3).\end{equation*}

\begin{clm}$\opt(P)\leq L$.\label{c05}\end{clm}

{Suppose that $\mathbf p$ is an optimal solution of $P$.} {We can assume without loss of generality that $p_\tau>0$. Let $U=V\setminus B_{\tau}(\mathbf p)$ denote the set of consumers who do not purchase during the whole selling process.} Note first from our previous discussion that $\rv_{V_i}(\mathbf p)$ may be greater than $24a_i$, although {$\rv_{C^j}(\mathbf p)\leq 6a+3$} holds for every $j\in [m]$. However, $\rv_{V_i}(\mathbf p)\leq 24a_i+3$ is always valid, because {$\sum_{v\in V_i}w_v(\cup_{j=1}^mC^j)\le3$} (recall  the definition of 3-OCC-3SAT {and the construction of $P$}). Also, when {$V_i\setminus U\subseteq B_{\xi-1}(\mathbf p)$, we do have} $\rv_{V_i}(\mathbf p)\leq 24a_i$. Therefore, if {$(\cup_{i=1}^n V_i)\setminus U\subseteq B_{\xi-1}(\mathbf p)$}, the above claim is derived immediately from Claims \ref{c03}  and \ref{c04}.

Suppose some {$v\in V_i\setminus U$ purchases at price $p_t>0$} with $t\ge \xi$. If $v\in \{y_{i1},y_{i2},y_{i3}\}$, or {$v\in\{x_i,\neg x_i\}$} and $p_t>3$,
 then it can be seen easily that  $\rv_{\xi-1,V_i}(\mathbf p)\le20a_i$, and {hence (\ref{e01}) implies $\rv_{V_i}(\mathbf p)\le20a_i+5(2a+3)<20a_i+2a_i+15<24a_i$.} {It remains to consider the case where} $\{y_{i1},y_{i2},y_{i3}\}\subseteq B_{\xi-1}(\mathbf p)$, {$0<p_t\le3$,} and $v\in \{x_i,\neg x_i\}\cap D^{j_0}$ for some $j_0\in[m]$ {with $c^{j_0}\not\in B_{t-1}(\mathbf p)$}. {Since $c^{j_0}$ does not purchase before time $t$}, it must be the case that $\mathbf{p}\cap(p_t,2a+1]=\emptyset$. %, because otherwise $v$ will not purchase at any price greater than zero.
 It follows {$p_t\le3$} that $\rv_{C^j}(\mathbf p)\le 9$ for all $j\in [m]$. Hence (\ref{e01}) implies $\rv(\mathbf p)=\rv_V(\mathbf p)\le \sum_{i=1}^{n}(24a_i+3)+9m<L$. %The only additional point that should be considered is that $\rv_{V_i}(\mathbf p)$ might be slightly larger than $\rv_{\xi-1,V_i}(\mathbf p)$ for some $\mathbf{p}$, because of the nonemptyness of $D^j \setminus B_{\xi-1}(\mathbf p)$. However, since the values of these consumers are much smaller less than $a$  after time $\xi-1$ (recall (\ref{e1})), it can be seen from the proof to Claim \ref{c4} that to derive the total revenue from the clause consumers as more as possible, the  consumers in $D^j\cap (V\setminus B_{\xi-1}(\mathbf p))$ will not buy afterwards either. To put it formally, if $\mathbf{p}$ is optimal, then it must be true that $\rv_{V_i}(\mathbf p)=\rv_{\xi-1,V_i}(\mathbf p)$. This fixes the only point that might go wrong.
So Claim \ref{c05} is indeed correct.\qedd

\medskip To establish the NP-hardness of the pricing problem, it suffices to prove that
\begin{equation*}\opt(P)\geq L \Leftrightarrow I ~\mbox{is satisfiable}. \end{equation*}

($\Leftarrow$) Suppose that $I$ has a satisfactory truth assignment $\pi$ with $s$ variables assigned ``TRUE" and the {remaining} $n-s$ variables  assigned
``FALSE". Let {pricing sequence} $\mathbf{p}=(p_1,p_2,\ldots,p_{n+s+1})$ be a solution to $P$ such that
\begin{itemize}
\item There are one or two prices for each variable gadget {depending on} whether the variable is assigned ``TRUE'' or ``FALSE'' in $\pi$:   if $x_{i}$ is assigned ``TRUE"  then $  10a_i,2a_i\in\mathbf p$,   if $x_{i}$ is
assigned ``FALSE"  then $  6a_i\in\mathbf p$;
\vspace{-1mm}\item  There is  a common price for the $m$ clause gadgets: $p_{n+s+1}=2a+1\in\mathbf p$.
\end{itemize}
For this $\mathbf p$,  note {from (\ref{e01})} that $\xi=n+s+1$. According to Claim \ref{c03}, {$\rv_{\xi-1,V_i}(\mathbf p)=24a_i$ for each $i\in [n]$}. For each clause gadget $C^j$, due to Claim
\ref{c01}(i) and (ii), we know that consumer $x^{j\ell}\in B_{\xi-1}(\mathbf p)$, $1\leq \ell\leq 3$, if and only if the corresponding literal  is ``FALSE" in $\pi$. Since $\pi$ is a satisfactory assignment, we know that there is at least one literal in $x^{j1},x^{j2},x^{j3}$ that is assigned
true. Therefore, for each $j$, it holds that $D^j \setminus B_{\xi-1}(\mathbf p)\neq \emptyset$. Combining this fact with Claim \ref{c04}, we know
that $\rv_{\xi,C^j}(\mathbf p)=6a+3$ for each clause $C^j$. This completes the sufficiency part.

($\Rightarrow$) Suppose now $\opt(P)\geq L $. Due to Observation \ref{o3}, there exists a {normal} pricing sequence $\mathbf{p}=(p_1,p_2,\ldots,p_{\tau})$
whose objective {value $\rv(\mathbf p)$} is at least $L$. Combining with Claim \ref{c05}, this can only be the case that $\opt(P)=L$. By arguments in the proof of Claim \ref{c05}, we know that conditions in Claims \ref{c03} and   \ref{c04} must hold. We construct a truth assignment $\pi$ as follows: for each
{$i\in[n]$}, if $\mathbf{p}\cap [2a_i,10a_i]=\{2a_i,10a_i\}$, we assign ``TRUE" to variable $x_i$. Otherwise, that is $\mathbf{p}\cap
[2a_i,10a_i]=\{6a_i\}$, {we assign ``FALSE" to   $x_i$}. By Claim \ref{c04}, we know that $D^j \setminus B_{\xi-1}(\mathbf p) \neq \emptyset$ for all {$j\in[m]$}. Therefore $\pi$ is indeed a satisfactory truth assignment for $I$. This completes the necessity part, {and therefore the proof of Theorem \ref{easynp}}.
\end{proof}

{A corollary of the above proof says that the length of the optimal pricing sequence of the PNC problem can not be upper bounded by any constant. This remains true for the unweighted trees without intrinsic values, as seen from the proof of the following stronger NP-hardness result.}

%Observe first that in the PNC model with integer intrinsic values, all meaningful prices must be {integers}. Suppose that there exists some constant number  $c$  such that in any optimal pricing sequence of the PNC problem, the number of prices could be upper-bounded by $c$. Then when the {networks} are unweighted and the intrinsic values are polynomially  bounded {above} by $n$ (in which case the maximum meaningful price {must be a polynomial of $n$}), the PNC problem would be efficiently solvable simply by brutal-force.  The natural question {is:} does {such a} constant exist? {Our} following result answers this question with ``No", {assuming}  {P$\neq$NP}.

%In fact, using a similar idea, we can prove a more strong NP-hardness.
\begin{theorem}\label{hard}{
{In} the PNC model, computing {an}  optimal pricing sequence is NP-hard, even when the underlying network is an unweighted tree and all the intrinsic
values are zero.} \qed
\end{theorem}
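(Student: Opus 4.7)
The plan is to build on the weighted reduction from Theorem~\ref{easynp} and replace its integer weights by unweighted tree gadgets. Since multi-edges and parallel paths are forbidden in a tree, the role previously played by a large weight $a_i$ on a single edge must be simulated by attaching many pendant leaves to key nodes, so that their initial ``live degrees'' (which, for zero intrinsic values on unit weights, equal their initial values $w_v(V)$) take the required large numerical values. Throughout, the hierarchy $a_1>5a_2>\cdots>5^{n-1}a_n>5^na>5^{n+1}mn$ of (\ref{e01}) is imitated by choosing the number of pendants attached to the $i$-th variable gadget to be much larger than that of the $(i+1)$-st and much larger than anything used in the clause gadgets; this separation guarantees that, in any normal optimal pricing sequence, the high-degree nodes of $V_i$ must be dealt with before any clause node is touched, just as in Theorem~\ref{easynp}.

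Concretely, for each variable $x_i$ I would build a tree gadget $V_i$ that plays the same role as the weighted gadget in Figure~\ref{Figure.1}: it contains two distinguished literal nodes $x_i$ and $\neg x_i$ together with auxiliary nodes (the analogues of $y_{i1},y_{i2},y_{i3}$), each equipped with its own bundle of pendant leaves so that, at each stage of the selling process, the live degrees of the internal nodes match the thresholds $10a_i$, $6a_i$, $2a_i$ of the weighted proof. The sizes of these bundles are tuned so that exactly two internal pricing patterns extract the full internal revenue $M_i$: one in which $\neg x_i$ is sold early (coding $x_i=$\,TRUE) and one in which $x_i$ is sold early (coding $x_i=$\,FALSE); any other pattern leaves at least one pendant or auxiliary node uncollected and therefore yields strictly less. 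For each clause $c^j$ I would similarly build a small tree gadget $C^j$ that yields a bonus of the same flavor as $6a+3$ in Claim~\ref{c04} precisely when at least one of its three literal neighbors is still alive when the clause phase is reached. Finally, each literal node is joined to its clause nodes by single edges; since the $V_i$ and $C^j$ are pairwise vertex-disjoint subtrees and, by the 3-OCC-3SAT promise, each literal appears in at most three clauses, a careful acyclic routing of these literal-to-clause edges (for instance by making the $V_i$'s hang off a common root and letting clause nodes be leaves of that global tree) can be arranged so that the whole construction is a single tree.

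With these gadgets in place I would reproduce, one by one, the four claims of the proof of Theorem~\ref{easynp}: restricting to normal pricing sequences via Observation~\ref{o3}; showing the analogue of Claim~\ref{c03}, namely that the revenue from each $V_i$ is at most $M_i$ with equality iff one of the two coding patterns is used, and that under equality exactly one of $\{x_i,\neg x_i\}$ is in $B_{\xi-1}(\mathbf p)$; showing the analogue of Claim~\ref{c04}, that the revenue from each $C^j$ is at most the clause bonus $B$, with equality iff some literal of $c^j$ is still alive at time $\xi$; and bounding the total revenue by $L=\sum_{i=1}^n M_i+mB$ as in Claim~\ref{c05}. The equivalence $\opt(P)\ge L\Leftrightarrow I$ is satisfiable then follows by exactly the same book-keeping as in the proof of Theorem~\ref{easynp}, translating a satisfying assignment into a pricing sequence that realizes $L$ and, conversely, reading a satisfying assignment off any optimal pricing sequence via the TRUE/FALSE coding of the variable gadgets.

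The main obstacle is the design of the variable gadget $V_i$. In the weighted proof the asymmetry between the two optimal patterns was purchased cheaply by an asymmetric choice of edge weights in Figure~\ref{Figure.1}; when weights are forbidden, achieving the same asymmetry while keeping the structure a tree requires careful counting of pendants attached to $x_i$, $\neg x_i$, and each $y_{i\ell}$, because in a unit-weight tree the seller can always consider the cheap ``sell-all-pendants-first-at-price-$1$'' strategy and similar side moves. Ruling out such side strategies, and verifying that the only internal patterns extracting $M_i$ are the two intended ones, will require a local case analysis parallel to the one following Claim~\ref{c01}, in which the possible choices for the largest price $\hat p$ used inside $V_i$ are enumerated and each alternative is shown to lose at least one unit of pendant revenue, enough to drop the total below $M_i$. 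This is the technical heart of the reduction; everything else is a careful re-play of the proof of Theorem~\ref{easynp}.
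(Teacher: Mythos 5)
Your high-level instinct---simulate the large weights of Theorem~\ref{easynp} by pendant bundles and then replay the four claims---is the right flavor, but your sketch glosses over the obstacle that actually drives the paper's construction: with one gadget $V_i$ containing both literal nodes $x_i$ and $\neg x_i$, the overall graph cannot be a tree. By the non-triviality assumption each variable $x_i$ has some clause $c^{j}$ with $x_i\in c^{j}$ and some $c^{j'}$ with $\neg x_i\in c^{j'}$, and a clause node must be adjacent to all three of its literal nodes for its live degree to encode which literals are still alive. As soon as $x_i$ and $\neg x_i$ are joined inside $V_i$ (as in the Figure~1 weighted gadget you want to imitate) and each is also joined to a clause node, the union of variable gadgets and the literal--clause edges contains a cycle, regardless of how you ``hang the $V_i$'s off a common root.'' Making the clause nodes leaves, as you suggest, is also not an option, since a leaf has only one neighbor and thus cannot detect whether any of its three literals survived.

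The paper's way around this is a genuinely different construction, not a re-routing: it builds a \emph{separate} gadget $X_i^j$ (resp.\ $\bar X_i^j$) for every single literal occurrence, rooted at its own literal node $x_i^j$ (resp.\ $\bar x_i^j$), and then introduces a connection gadget (a huge star centered at $s$, adjacent to all clause nodes) that glues the $4m{+}1$ vertex-disjoint gadgets into a tree. Because the same numeric thresholds $k_i-1,k_i,k_i+1$ appear in every copy $X_i^j,\bar X_i^j$ associated with variable $i$, the global decreasing price sequence automatically treats all copies of $x_i$ consistently, so splitting the variable into per-occurrence gadgets does not let the seller ``sell $x_i$ in one clause but not another.'' Moreover the connection node $s$ is forced to be the unique first purchaser (Claim~\ref{ds}), which calibrates the clause nodes' values and replaces the 3-OCC restriction you still rely on (the paper's reduction is from plain 3SAT). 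These two ideas---gadget-per-literal-occurrence plus the star connection node, together with the observation that a single global price sequence enforces cross-copy consistency---are the technical heart of the proof, and they are missing from your proposal; the pendant-bundle degree bookkeeping you do describe is, by comparison, the routine part.
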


{While the proof,  which we postpone  to the appendix, has high level similarities to the one for Theorem~\ref{easynp}, a substantially more careful approach is required to handle the acyclic structure, and new ideas are needed to simulate the weights with unweighted links.}

In view of the above NP-hardness result, it is {desirable} to design good approximation algorithms for the general {PNC} problem and exact
algorithms for special cases. {In the following, we obtain 2-approximation for the general case (Theorem \ref{2-app}), and an optimal pricing for unweighted split networks (Theorem \ref{th:split}).}

 \subsection{{$2$-approximation}}\label{sec:2apx}

As to approximation, we find that, more or less surprisingly, a very simple greedy algorithm performs fairly well, {achieving 2-approximation} for the most general scenario.

For any  {subnetwork} $H$ of $G$, and any $i\in V(H)$, where $V(H)$ is the node set of $H$,  we use
$d^w_H(i)=\sum_{j\in V(H)}w_{ij}$ to denote the weighted degree of $i$ in $H$. {For any real function $f$ and any nonempty subset $S$  of its domain, let $f(S)=\sum_{s\in S}f(s)$.} %\vspace{-2mm}

\begin{algorithm} %\SetAlgoNoLine
\KwIn{Network $G=(V,E)$
with weight {function} $w\in\mathbb Z_+^{V\times V}$ and intrinsic value function $\nu\in\mathbb R_+^V$.} \KwOut{Sequence $\mathbf p$ of prices.}
\begin{enumerate}
  \item[1.]   $G_0\leftarrow G$,\quad $t\leftarrow0$\label{init}
  \vspace{-1.6mm}\item[2.]  \textbf{While} $V(G_t)\neq \emptyset$ \textbf{do}\label{nonempty}
  \vspace{-1.6mm}\item[3.] \hspace{3mm} $t\leftarrow t+1$
  \vspace{-1.6mm}\item[4.] \hspace{3mm} $p_t\leftarrow \max\{\nu(i)+d^w_{G_{t-1}}(i):{i\in V(G_{t-1})}\}$\label{stp4}
  \vspace{-1.6mm}\item[5.] \hspace{3mm} $G_t\leftarrow G_{t-1}\setminus B(p_t)$
  \vspace{-1.6mm}\item[6.]  \textbf{End-while}
  \vspace{-1.6mm}\item[7.] Output $\mathbf p\leftarrow(p_1,p_2,\ldots,p_t)$
\vspace{-1.6mm}
\end{enumerate}
\caption{{Iterative Pricing}} \label{alg1}
\end{algorithm}

\begin{theorem}\label{2-app}{For the PNC model, Algorithm \ref{alg1} finds a $2$-approximate pricing sequence in $O(n^2)$ time}.
\end{theorem}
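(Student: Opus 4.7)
The plan is to sandwich $\opt$ and $\rv(\mathbf p)$ (where $\mathbf p$ denotes the pricing sequence output by Algorithm~\ref{alg1}) between the two natural quantities $\nu(V)+w(E)$ and $\nu(V)+2w(E)$, with $w(E):=\sum_{ij\in E}w_{ij}$. For the upper bound on $\opt$, I would argue that any consumer $i$ who purchases in an optimal execution pays no more than her initial total value $\nu(i)+d^w_G(i)$: at the round $i$ buys, the set $S^*$ of still-unsold consumers satisfies $i\in S^*\subseteq V$, and so $w_i(S^*)\leq w_i(V\setminus\{i\})=d^w_G(i)$. Summing over all consumers gives $\opt\leq\nu(V)+2w(E)$.

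For the lower bound on $\rv(\mathbf p)$, the key observation is that Step~4 sets $p_t$ to be the maximum of $\nu(i)+d^w_{G_{t-1}}(i)$ over $i\in V(G_{t-1})=V\setminus B_{t-1}(\mathbf p)$, so by the impatient-buyer rule the set $B(p_t)$ coincides with the argmax set, and every buyer $i\in B(p_t)$ pays exactly $p_t=\nu(i)+w_i(V(G_{t-1}))$. Writing $\tau(i)$ for the round in which $i$ purchases and $S_{\tau(i)}:=V(G_{\tau(i)-1})$, this yields
\[
\rv(\mathbf p)=\sum_{i\in V}\bigl(\nu(i)+w_i(S_{\tau(i)})\bigr)=\nu(V)+\sum_{i\in V}w_i(S_{\tau(i)}).
\]
The heart of the proof is then a simple double-counting of edges: for each $ij\in E$, the weight $w_{ij}$ appears in $w_i(S_{\tau(i)})$ iff $\tau(j)\geq\tau(i)$, and in $w_j(S_{\tau(j)})$ iff $\tau(i)\geq\tau(j)$, hence is counted at least once (twice when $\tau(i)=\tau(j)$). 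Thus $\sum_i w_i(S_{\tau(i)})\geq w(E)$, giving $\rv(\mathbf p)\geq \nu(V)+w(E)$.

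Combining the two estimates yields
\[
\rv(\mathbf p)\;\geq\;\nu(V)+w(E)\;\geq\;\tfrac12\bigl(\nu(V)+2w(E)\bigr)\;\geq\;\tfrac12\opt,
\]
the middle inequality being equivalent to $\nu(V)\geq 0$. For the $O(n^2)$ running time, note that the argmax consumer always lies in $B(p_t)$, so $G_t$ is a proper subgraph of $G_{t-1}$ and the while-loop performs at most $n$ iterations; by maintaining the weighted degrees $d^w_{G_{t-1}}(\cdot)$ incrementally (each time a neighbor is removed, decrement by the corresponding $w_{ij}$), every iteration runs in $O(n)$ time. The only delicate point I anticipate is identifying $B(p_t)$ exactly with the argmax set and thereby turning the sum $\rv(\mathbf p)=\sum_t p_t|B(p_t)|$ into $\sum_i\bigl(\nu(i)+w_i(S_{\tau(i)})\bigr)$; once this per-consumer decomposition is in place, the edge double-counting is transparent.
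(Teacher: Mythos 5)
Your proof is correct and follows essentially the same sandwich argument as the paper: both bound $\opt\leq\nu(V)+2w(E)$ from above and $\rv(\mathbf p)\geq\nu(V)+w(E)$ from below. The only cosmetic difference is that the paper establishes the lower bound by a telescoping sum over rounds, $p_t|B(p_t)|\geq\nu(B(p_t))+w(E(G_{t-1})\setminus E(G_t))$, whereas you decompose the revenue per consumer and double-count per edge; these are the same accounting in different order, and the key observation that $B(p_t)$ is exactly the argmax set and each buyer pays $\nu(i)+d^w_{G_{t-1}}(i)$ is identical in both.
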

\begin{proof}
Let $\mathbf p^*$ be {an} optimal pricing, and $\mathbf p=(p_1,p_2,\ldots,p_{\tau})$ be the pricing output by the algorithm. Since each link $ij\in E$
can contribute at most $2w_{ij}$ to consumers' {total values} ($w_{ij}$ to each of $i$ and $j$), we see that \[\rv(\mathbf p^*)\le\nu(V)+2w(E).\]
On the other hand, the definition of $p_t$ in Step 4 of Algorithm \ref{alg1} guarantees that $B(p_t)={\text{argmax}}\{\nu(i)+d^w_{G_{t-1}}(i):i\in
V(G_{t-1})\}$ and therefore  each link $ij\in E(G_{t-1})$ with $i\in B(p_t)$ contributes $w_{ij}$ to $i$'s {total value}, giving
\[p_t\cdot|B(p_t)|=\nu(B(p_t))+\sum_{i\in B(p_t)}d^w_{G_{t-1}}(i)\ge\nu(B(p_t))+w(E(G_{t-1})-E(G_t)), \text{ for all } t\in [\tau].\]
It follows that
\begin{eqnarray*}
\rv(\mathbf p)=\sum_{t=1}^{\tau}p_t\cdot|B(p_t)|&\ge&\sum_{t=1}^{\tau}\nu(B(p_t))+\sum_{t=1}^{\tau}w(E(G_{t-1})-E(G_t))\\
&=&\nu(\cup_{t=1}^{\tau}B(p_t))+\sum_{t=1}^{\tau}w(E(G_{t-1}))-w(E(G_t))\\
&=&\nu(V)+w(E(G_0))-w(E(G_{\tau}))\\
&=&\nu(V)+w(E),
\end{eqnarray*}
where $G_0=G$ and $G_{\tau}=\emptyset$ are guaranteed by Steps 1 and 2. Hence \[\rv(\mathbf p^*)/\rv(\mathbf p)\le(\nu(V)+2w(E))/(\nu(V)+w(E))\le2\]
justifies the approximation ratio 2.

 To see the $O(n^2)$ running time, we note that the while-loop repeats $\tau\le n$ times, and each repetition finishes in $O(n)$ time.
\end{proof}

\subsection{Optimal pricing for unweighted split networks\label{split}}

 Network $G=(V,E)$ is a {\em split network} if  {its node set} $V$ can be partitioned into two sets $C$ and $I$ such that $C$ induces a clique   and $I$ is an
independent set of $G$. Clearly, the nodes in $I$ can  only have neighbors in $C$. In case of each node in $I$ adjacent to exactly one node in $C$,
network $G$ is called {\em core-peripheral}. Core-peripheral networks are widely accepted as good simplifications of many real-world networks and
thus have been extensively studied in various environments \cite{b07}.

%Given a split network $G$, it is easy to identify a clique part $V_1$ and an independent part $V_2$ by checking the node degrees from maximum to minimum. Besides, it can be easily seen that the core-peripheral network is a special case of split network.
{We consider the case of uniform intrinsic values,   which can be assumed w.l.o.g. to be zeros.}
Let {$d(v)=d_G(v)$} denote the degree of $v\in V$ in $G$. Suppose that $C=\{v_1,v_2,\ldots,v_k\}$, %$I=V\setminus C$
and $d(v_i)\leq d(v_{i+1})$ for every $i\in [k-1]$. For each $i\in [k]$, note that $ v_1,\ldots,v_i $ form a clique set $C_i$ and their neighbors in
$I$ form an independent set $I_i$, and $C_i\cup I_i$ induces a split subnetwork $G_i $ of $G$ with degree sequence
\[ d_{G_i}(u^i_{\ell_i})\le  d_{G_i}(u^i_{\ell_i-1})\le\cdots\le  d_{G_i}(u^i_{1}) \le d_{G_i}(v_1)\le\cdots\leq d_{G_i}(v_i),\]
where $I_i=\{u^i_1,u^i_2,\ldots,u^i_{\ell_i}\}$. Apparently, $d_{G_i}(v_h)=d(v_h)-(k-i)$ for every $h\in[i]$. Consider an optimal pricing $\mathbf
p=(p_1,\ldots, p_{\tau})$ for the {PNC} problem on  $G_i$, {and write the corresponding maximum revenue as $\opt(G_i)$}. One of the following must hold.
\begin{itemize}
\item   $p_1=d_{G_i}(v_{h+1})$  for some $h\in[i-1]$, and exactly $(i-h)$ nodes, i.e., $v_{h+1},\ldots,v_i$, purchase at price $p_1$, offering revenue   $(i-h)p_1=(i-h)d_{G_i}(v_{h+1})$. It follows that $\tau\ge2$ and $ (p_2,\ldots,p_{\tau})$ is an optimal  pricing for $G_{h}$, giving $\opt(G_i)=(i-h)d_{G_i}(v_{h+1})+\opt(G_h)=(i-h)d (v_{h+1}-k+i)+\opt(G_h)$.
\vspace{-1mm}\item   $p_1=d_{G_i}(u^i_j)$  for some $j\in [\ell_i]$ and exactly $(i+j)$ nodes, i.e., $u^i_j,u^i_{j-1},\ldots,u^i_{1}$, $v_1,v_2,\ldots,v_i$, purchase at price $p_1$, offering revenue is $(i+j)p_1=(i+j)d_{G_i}(u^i_j)$. Since  the nodes   not purchasing at price $p_1$ are pairwise {nonadjacent}, it is easy to see that $\mathbf p=(p_1)$ and $\opt(G_i)=(i+j)\cdot d_{G_i}(u^i_j)$.
\end{itemize}
For convenience, let $\opt(G_0)$ stands for {real number} $0$. Then $\opt(G)=\opt(G_k)$ can be computed by the following recursive {formula}:
\begin{eqnarray*}
\opt(G_i)=\max\left\{\max_{h=0}^{i-1}\{\opt(G_{h})+(d(v_{h+1})-k+i)(i-h)\},\max_{j=1}^{\ell_i}\{(j+i)\cdot d_{G_i}(u^i_j)\}\right\}\text{ for }i=1,2,\ldots,k.
\end{eqnarray*}
This formula  {implies the following result.}

\begin{theorem} \label{th:split}{For the PNC model, an optimal pricing sequence for any unweighted split network with uniform intrinsic values can be found in $O(n^2)$ time by dynamic programming.} \qed\end{theorem}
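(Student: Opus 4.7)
The plan is to justify the displayed recursion and then compute its values by bottom-up dynamic programming. Throughout I would invoke Observation~\ref{o3} to restrict attention to normal pricing sequences on each subnetwork $G_i$, which by Observation~\ref{distinct} are strictly decreasing. Since intrinsic values are zero and all links are unweighted, at any point in the selling process a remaining consumer's total value equals its current degree. Combined with the ordering
\[d_{G_i}(u^i_{\ell_i})\le\cdots\le d_{G_i}(u^i_1)\le d_{G_i}(v_1)\le\cdots\le d_{G_i}(v_i),\]
this forces the first (highest) price $p_1$ of any normal pricing sequence on $G_i$ to clear exactly a nonempty top-prefix of this ordered list, with $p_1$ equal to the current degree of the lowest-positioned consumer served.

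Next I would carry out the two-case analysis suggested in the excerpt. If the top-prefix served in round one consists solely of clique consumers, it must be $\{v_{h+1},\ldots,v_i\}$ for some $h\in\{0,\ldots,i-1\}$, yielding first-round revenue $(i-h)(d(v_{h+1})-k+i)$. After their departure, each remaining clique consumer $v_s$ ($s\in[h]$) has lost exactly $i-h$ neighbors, and the surviving independent-set consumers in $I_h$ have lost none, matching the structure of $G_h$; any extra surviving consumer in $I_i\setminus I_h$ is now isolated with zero value and contributes no revenue, so the residual problem is effectively the PNC instance on $G_h$. This yields the branch $\opt(G_h)+(d(v_{h+1})-k+i)(i-h)$. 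If instead $p_1$ serves some $u^i_j\in I_i$, the top-prefix property forces the buyers to be $u^i_1,\ldots,u^i_j,v_1,\ldots,v_i$; the unpurchased consumers $u^i_{j+1},\ldots,u^i_{\ell_i}$ lie in $I$ and have lost all of their (exclusively clique) neighbors, so they contribute nothing further and the process terminates with total revenue $(i+j)d_{G_i}(u^i_j)$. Taking a max over both cases gives the displayed recursion.

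The main subtlety to check is the exhaustiveness of these two cases under normality --- in particular, that no mixed top-prefix (containing some $u^i_j$ but only part of $\{v_1,\ldots,v_i\}$) can arise, which follows directly from the degree ordering above. Given the verified recursion, the algorithm computes $\opt(G_0),\opt(G_1),\ldots,\opt(G_k)$ in order. Sorting $C$ by degree and precomputing the degrees $d_{G_i}(u^i_j)$ for all relevant $(i,j)$ takes $O(n^2)$ time, and each of the $O(n)$ subproblems then evaluates a maximum over $O(n)$ candidates in $O(n)$ time, yielding the claimed $O(n^2)$ overall running time. Storing back-pointers at every $\max$ reconstructs an actual optimal pricing sequence within the same bound.
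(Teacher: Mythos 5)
Your proposal matches the paper's derivation: you establish the same two-case recursion on $\opt(G_i)$ (first round sells to a suffix $v_{h+1},\ldots,v_i$ of the clique, or to the whole clique plus a prefix $u^i_1,\ldots,u^i_j$ of $I_i$), apply the same dynamic program, and reach the same $O(n^2)$ bound. One small slip: after $v_{h+1},\ldots,v_i$ depart, a surviving $u\in I_h$ may well have lost some of them as neighbors (an $I_h$-node can be adjacent to several clique nodes, not just those in $C_h$), so ``have lost none'' is inaccurate; what actually holds --- and what your conclusion needs --- is that the residual degree of $u$ equals $d_{G_h}(u)$, the count of its neighbors inside $C_h$, which is exactly what makes the residual process coincide with the PNC instance on $G_h$ up to isolated, zero-value nodes in $I_i\setminus I_h$.
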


\section{{Approximation by single pricing}%Single pricing {as an approximation}
}\label{sec:single}

{Finding an optimal single pricing is trivial because it can be chosen from the $n$ total values of the consumers. Thus it  is natural to ask: How
does the optimal single pricing work as an approximation to the optimal iterative pricing?}   We find that the answer is both ``good" and ``bad", in
the sense that single pricing {works} rather well for many interesting {networks with unit weights and uniform intrinsic values, including forests, Erd\H{o}s-R\'enyi networks and Barab\'asi-Albert networks}, but  {in general, its worst-case performance, even when restricted to unweighted networks, can be} arbitrarily bad. This justifies the
importance of the research of iterative pricing, and at the same time poses the interesting question of investigating the relation between single
pricing and iterative pricing for more realistic scenarios.

{In this section, we restrict our attention to unweighted networks $G$ with uniform intrinsic values, for which we may assume without loss of generality that all intrinsic values are zero, and use $\opt(G)$ to denote the revenue derived from an optimal iterative pricing.}

\subsection{$1.5$-approximation for {forests}}\label{appforest}
{We show that the {best} single price guarantees an approximate ratio of 1.5 for unweighted forests with uniform intrinsic values.}
\begin{theorem}\label{forest}{For the PNC model, the} single pricing $p$ with maximum  $p\cdot|B(p)|$ {has an approximation ratio of $1.5$}  for  {unweighted forests with uniform intrinsic values}. %{To obtain this ratio, the single price can be simply chosen between 1 and 2, whichever produces a larger revenue.}
\end{theorem}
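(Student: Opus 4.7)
\medskip
\textbf{Proof plan.} Let $F=(V,E)$ be an unweighted forest with zero intrinsic values, and let $\mathbf{p}^*=(p_1,\dots,p_\tau)$ be a normal optimal iterative pricing with $p_1>\cdots>p_\tau\geq 1$. Write $B_t=B(p_t)$ and $V_t=V\setminus\bigcup_{s<t}B_s$. My plan is to exhibit a single price $\hat p\in\{1,2,p_1\}$ whose revenue is at least $\tfrac{2}{3}\cdot\text{rv}(\mathbf{p}^*)$.

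First I would upper-bound $\text{rv}(\mathbf{p}^*)$ by an edge-counting identity: since $p_t \le d_{V_t}(i)$ for every $i\in B_t$,
\begin{equation*}
\text{rv}(\mathbf{p}^*) \;\le\; \sum_{t=1}^{\tau}\sum_{i\in B_t} d_{V_t}(i) \;=\; |E(F)| + m_s(\mathbf{p}^*),
\end{equation*}
where $m_s(\mathbf{p}^*)=\sum_t |E(F[B_t])|$ counts edges whose two endpoints are purchased in the same round. Each edge of $F$ is charged $0$ (if neither endpoint buys), $1$ (if exactly one buys, or both buy in different rounds), or $2$ (same round). Because $F[B_t]$ is itself a forest, $m_s(\mathbf{p}^*)\le \sum_t (|B_t|-1)\le |N|-\tau$, where $N=\bigcup_t B_t$.

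Second, I would reduce to the case of a $2$-round pricing with $p_\tau=1$. The heart of this reduction is the forest property: once the top-degree consumers in $B_1$ are removed, each subsequent round sells to a set $B_t$ whose size is controlled by the number of non-isolated consumers in $V_t$, and collapsing all rounds $t\geq 2$ into a single round at price~$1$ loses at most $(p_t-1)|B_t|$ per round -- loss terms that are already accounted for by the first-round revenue and the $n_{\ge 2}$ bookkeeping below. After this reduction, $\text{rv}(\mathbf{p}^*)=p_1|B_1|+|B_2|$, where $|B_2|$ is the number of non-isolated consumers in the sub-forest $F\setminus B_1$, and I use the bound $|B_2|\le n_{\ge 1}(F)-|B_1|$.

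Third, I compare with the three single-price candidates, using the inequalities $\text{OPT}_s\ge n_{\ge 1}(F),\,2n_{\ge 2}(F),\,p_1|B_1|$. A short case analysis on the value of $p_1$ does the job: if $p_1=1$ then $\text{rv}(\mathbf{p}^*)$ equals the single-price revenue at $\hat p=1$; if $p_1=2$ then $|B_1|\le n_{\ge 2}(F)=\tfrac{1}{2}S_{2}\le \tfrac{1}{2}\text{OPT}_s$, giving $\text{rv}(\mathbf{p}^*)\le |B_1|+n_{\ge 1}(F)\le \tfrac{3}{2}\text{OPT}_s$; and if $p_1\ge 3$, the structural inequality
\begin{equation*}
(p_1-1)|B_1|+(n_{\ge 1}(F)-|B_1|)\;\le\;\tfrac{3}{2}\max\{\,n_{\ge 1}(F),\,2n_{\ge 2}(F),\,p_1|B_1|\,\}
\end{equation*}
follows by tightly charging the round-$2$ purchases against consumers of degree between $2$ and $p_1-1$ -- exactly two per such consumer, a tally that is made precise by the acyclic structure. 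The asymptotically tight instance is the ``spider'' $K_{1,k}$ with a pendant attached to every leaf, for which $p_1=k,\,|B_1|=1,\,n_{\ge 2}(F)=k+1$ and the ratio $3k/(2k+2)\to 3/2$ as $k\to\infty$.

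The main obstacle is the last structural inequality when $p_1\ge 3$: one must rule out, via the forest property, the possibility that many round-$2$ purchases are leaves-attached-to-$B_1$ (which would not be chargeable to any single price besides~$1$). The key fact is that such ``lost'' leaves are exactly the set $L$ satisfying $(p_1-1)|B_1|\ge |L|$ by optimality of $\mathbf{p}^*$ (otherwise a single round at price~$1$ would already be better), so after subtracting~$|L|$ the remaining round-$2$ buyers either lie in $V_{\ge 2}(F)$ or pair up in $K_2$-components, in both cases safely absorbed by $2n_{\ge 2}(F)$ or~$n_{\ge 1}(F)$.
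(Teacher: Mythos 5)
Your approach is genuinely different from the paper's: the paper decomposes the forest into trees $G_i$, proves the pointwise bound $\opt(G_i)\le 2|V_i|-\ell_i$ (where $\ell_i$ is the number of leaves) by grouping each leaf $u$ of the ``pruned'' tree $G_i'$ with its pendant neighbors $L(u)$ and showing the total payment from $\{u\}\cup L(u)$ is at most $d_{G_i}(u)$, and then compares with $\max\{|B(1)|,2|B(2)|\}\ge\frac23(2|V_i|-\ell_i)$. You instead propose an edge-charging identity, a collapse to two rounds, and a case split on $p_1$.

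As written, there are two genuine gaps. First, the ``reduction to a two-round pricing'' does not give an upper bound on $\rv(\mathbf p^*)$: since $\mathbf p^*$ is optimal, the collapsed pricing $(p_1,1)$ has revenue \emph{at most} $\rv(\mathbf p^*)$, so $p_1|B_1|+|B_2|$ is a lower bound on $\rv(\mathbf p^*)$, not an upper bound. What you actually need is $\rv(\mathbf p^*)\le p_1|B_1|+|B_2|+\sum_{t\ge 2}(p_t-1)|B_t|$, and the ``loss terms'' $\sum_{t\ge 2}(p_t-1)|B_t|$ are never bounded — you only assert they are ``accounted for.'' Second, the structural inequality in the $p_1\ge 3$ case is not an upper bound on $\rv(\mathbf p^*)$ either: e.g.\ for a center of degree $d$ joined to $d$ nodes $u_i$ of degree $3$, each $u_i$ carrying a pendant leaf $w_i$ and a path $v_i$--$z_i$, the optimum is $(3,1)$ with $\rv(\mathbf p^*)=5d+3$ and $p_1=3,|B_1|=d+1$, whereas your LHS $(p_1-1)|B_1|+(n_{\ge 1}-|B_1|)=5d+2<\rv(\mathbf p^*)$; moreover the naive bound $\rv(\mathbf p^*)\le p_1|B_1|+(n_{\ge 1}-|B_1|)$ equals $6d+3=\tfrac32\,\mathrm{OPT}_s$ exactly here, so the factor $3/2$ can only be recovered by subtracting the set $L_0$ of leaves whose unique neighbor lies in $B_1$ — but the argument connecting $|L_0|$ to $(p_1-1)|B_1|$ and to the claimed absorption into $2n_{\ge 2}$ or $n_{\ge 1}$ is left as a one-sentence sketch and is not a proof. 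The initial edge-charging bound $\rv(\mathbf p^*)\le|E|+|N|-\tau$ is correct but too loose on its own (it roughly gives $2|V|$), and you do not use it subsequently. So while the overall plan is plausible and the tight example is right, the two central inequalities are unproved and one is stated in a form that is not even valid as an upper bound.
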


\begin{proof} %Assume w.l.o.g that the uniform intrinsic values are zero.
Suppose that forest $G=G_0$ consists of $k$ components (trees) $G_h=(V_h,E_h)$, $h=1,\ldots,k$.  {Let $\ell_h$ denote the number of leaves in $G_h$. %and $\opt(G_h)$ denote the optimal revenue of $G_h$, $h=0,1,\ldots,k$}.
Note that} $\sum_{i=1}^k\opt(G_i)\ge \opt(G_0)$, and  \begin{eqnarray*} p\cdot|B(p)|
\ge\max\{|B(1)|,2|B(2)|\}= \max\{|V_0|, 2(|V_0|-\ell_0)\}\ge \frac23\left(2|V_0|-\ell_0\right)=\frac23\left(2\sum_{i=1}^k|V_i|-\sum_{i=1}^k\ell_i\right).\end{eqnarray*}
 {It} suffices to show that   $
\opt(G_i)\le 2|V_i|-\ell_i$ for each $i\in[k]$, {in order to guarantee that the approximation is at most 1.5}.

If $G_i$ is a star network  {or a link}, then   $\opt(G_i)=|V_i|\le 2|V_i|-\ell_i$. Suppose that $G_i$ is {neither a star nor a link}, and let $G_i' $ be the tree obtained from $G_i$ by deleting all its leaves.  % Otherwise, we use $G'(V',E')$ to represent the graph that delete all leaf nodes of $G$. If $G'$ is a forest, then add some edges to make it connected. So now $G'$ is connected tree and
 {Clearly,  $ d_{G_i'}(v)\leq d_{G_i}(v)$   for every non-leaf node of $G_i'$.}

{Let $\mathbf p$ be an  optimal pricing for $G_i$. Consider an arbitrary leaf $u$ of $G_i'$. Let $L(u)$ denote {the set of $u$'s leaf neighbors in
$G_i$}. Under $\mathbf p$, either $u$ purchases before all nodes in $L(u)$ at a price higher than  {1} or all nodes in $\{u\}\cup L\{u\}$ purchase at
price {1}. {As $u$ has at least one non-leaf neighbor in $G_i$, it is easy to see that} in either case, the total payment by nodes in $\{u\}\cup L(u)$ is upper bounded by $d_{G_i}(u)\le d_{G_i'}(u)+|L(u)|$. Hence
\begin{eqnarray*}
\opt(G_i)&\le&\sum_{\text{non-leaf node $v$ of }G_i'}d_{G_i}(v)+\sum_{\text{leaf node $u$ of }G_i'}(d_{G_i'}(u)+|L(u)|)\\
&=&\sum_{\text{non-leaf node $v$ of }G_i'}d_{G_i'}(v)+\sum_{\text{leaf node $u$ of }G_i'}(d_{G_i'}(u)+|L(u)|)\\
&=&\ell_i+\sum_{v\in V_i}d_{G_i'}(v)\\
&=&\ell_i+ 2(|V_i|-\ell_i-1)\\
&<&2|V_i|-\ell_i,
\end{eqnarray*}
as desired.} \end{proof}
 \begin{remark}\label{tight1}
 {In Theorem \ref{forest}, to achieve  the approximation
ratio $1.5$, the single price can be simply chosen between 1 and 2, whichever produces a larger revenue. Moreover, the ratio $1.5$ is tight, as shown
by the following tree $G$.}
\end{remark}
{Tree $G$ with $n=1+2k$ nodes is a spider with center of degree $k$ and each leg of length 2 (i.e., the tree obtain from star $K_{1,k}$ by
subdividing each link with a node). It is easy to see that the maximum  revenue  $3k$ is given by pricing {sequence} $(k,1)$. However, any single pricing can
produce a revenue of at most $\max\{k\cdot 1,2\cdot (k+1),1\cdot (2k+1)\}=2k+2$. The tightness follows from $3k/(2k+2)\rightarrow 1.5$
($k\rightarrow\infty$).}

\subsection{Near optimal pricing for   {Erd\H{o}s-R\'enyi networks}} \label{sec:er}

For large $n$, there is a simple  algorithm that is ``almost optimal" for ``almost all'' Erd\H{o}s-R\'enyi networks $\mathbb G(n,\eta(n))$. {The network is}
constructed by connecting $n$ nodes randomly; each link is included in the network with
 probability $\eta(n)$. This algorithm, which will be referred to as  $A(\delta)$,  prices only once with  {price $(1-\delta) (n-1)\eta(n)$},
where $\delta>0$ is a parameter to be determined by the approximation ratio that we intend to reach. %Assume w.l.o.g that the uniform intrinsic values are zero.

\begin{theorem}\label{th:er}Given arbitrarily small positive number $\epsilon>0$, set {$\delta\in(0,1)$} such that
\begin{equation}
\frac{1+\delta}{1-\delta}<1+\epsilon.\label{r1}
\end{equation} Then {for the PNC model,} Algorithm A($\delta$) has
 an approximation ratio of at most $1+\epsilon$ for asymptotically almost all networks $\mathbb G(n,\eta(n))$,
 as long as \begin{equation}
 \frac{\eta(n)}{\sqrt{\text{{$(\ln n)$}}/n}}\rightarrow +\infty.\label{r0}\end{equation}
To be precise, under condition (\ref{r0}), we
have\begin{equation}\lim_{n\rightarrow\infty}Pr\left(\frac{2|E(\mathbb{G}(n,\eta(n)))|}{r(\mathbb{G}(n,\eta(n)))}\leq
1+\epsilon\right)=1,\end{equation}
where $E(\mathbb G(n,\eta(n))$ is the link set of $\mathbb G(n,\eta(n))$, % $\mathbb{G}(n)$ is a network with $n$ consumers   that is randomly (uniformly) picked,
$Pr(\cdot)$ is {the
  probability function, %$\opt(\mathbb{G}(n,p))$ is the optimal revenue on $\mathbb{G}(n)$
and $r(\mathbb{G}(n),\eta(n))$ is the revenue obtained {from} the single pricing $(1-\delta) (n-1)\eta(n)$}.
\end{theorem}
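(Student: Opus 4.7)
The plan is to compare the revenue obtained by Algorithm $A(\delta)$ against the universal upper bound $2|E|$ on any iterative pricing's revenue (which follows immediately from the argument used in the proof of Theorem \ref{2-app}, since in the unweighted, zero intrinsic value setting each link contributes at most $2$ to consumers' total values). Writing $\mu = (n-1)\eta(n)$, the single price posted by $A(\delta)$ is $p=(1-\delta)\mu$; a consumer $i$ buys if and only if her initial total value $w_i(V) = d_G(i)$ is at least $p$. So the key quantity is $r(\mathbb{G}(n,\eta(n))) = p \cdot |\{i : d_G(i) \geq p\}|$, and the target inequality is $2|E|/r \leq 1+\epsilon$.

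First I would use two Chernoff-type concentration bounds. For each fixed $i \in V$, $d_G(i)$ is a sum of $n-1$ independent Bernoulli$(\eta(n))$ variables with mean $\mu$, so
\[
\Pr[d_G(i) < (1-\delta)\mu] \leq \exp(-\delta^2 \mu/2).
\]
A union bound over $n$ nodes gives that the event $E_1 = \{\text{every } d_G(i) \geq (1-\delta)\mu\}$ occurs with probability at least $1 - n\exp(-\delta^2\mu/2)$. Under condition (\ref{r0}), $\mu \gg \sqrt{n \ln n}$, and in particular $\mu/\ln n \to \infty$, so $\Pr[E_1] \to 1$. On $E_1$ every consumer purchases at the single price, hence $r(\mathbb{G}(n,\eta(n))) = n \cdot (1-\delta)\mu$.

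Next I would control $|E|$ from above. Since $|E|$ is a sum of $\binom{n}{2}$ independent Bernoulli$(\eta(n))$ variables with mean $\binom{n}{2}\eta(n)$, Chernoff yields
\[
\Pr\bigl[2|E| > (1+\delta) n\mu\bigr] \leq \exp\bigl(-\delta^2 \binom{n}{2}\eta(n)/3\bigr),
\]
which tends to $0$ under (\ref{r0}). Call this good event $E_2$. On $E_1 \cap E_2$ we get
\[
\frac{2|E|}{r(\mathbb{G}(n,\eta(n)))} \leq \frac{(1+\delta) n \mu}{n (1-\delta)\mu} = \frac{1+\delta}{1-\delta} < 1+\epsilon
\]
by the choice (\ref{r1}) of $\delta$. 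Since $\Pr[E_1 \cap E_2] \to 1$, this yields the stated limit.

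I anticipate the only delicate point is verifying that the two Chernoff tail bounds above actually vanish under the precise hypothesis (\ref{r0}), and pinning down the fact that condition (\ref{r0}) is stronger than what the union bound in the first step literally requires (namely $\eta(n) \gg (\ln n)/n$); the stronger condition just gives a comfortable margin, and the second bound is easily satisfied since $\binom{n}{2}\eta(n) \to \infty$ under (\ref{r0}). Everything else is a direct combination of the two high-probability events together with the universal bound $\opt(G) \leq 2|E|$ established earlier.
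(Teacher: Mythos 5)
Your proof is correct and reaches the same conclusion, but via a slightly different (and in fact somewhat tighter) concentration argument. The paper applies a two-sided Hoeffding-type bound $\Pr(|d_i-\mu|>\delta\mu)\le 2\exp(-\delta^2\eta(n)^2(n-1)/2)$ to each degree, introduces $\alpha_n$ (the number of nodes with degree in $[(1-\delta)\mu,(1+\delta)\mu]$), and then bounds $2|E|\le\alpha_n(1+\delta)\mu+(n-\alpha_n)(n-1)$ and $r\ge\alpha_n(1-\delta)\mu$; a union bound shows $\Pr(\alpha_n<n)\to0$, so in the end the paper too is effectively establishing that \emph{every} degree is concentrated. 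You instead combine the universal bound $\opt(G)\le 2|E|$ from the proof of Theorem~\ref{2-app} with two separate high-probability events: a one-sided multiplicative Chernoff bound forcing every $d_G(i)\ge(1-\delta)\mu$ (so every consumer buys and $r=n(1-\delta)\mu$), and a Chernoff bound on $|E|$ itself forcing $2|E|\le(1+\delta)n\mu$. Your route is more direct, avoids the detour through $\alpha_n$ and $\epsilon_0$, and, as you observe, actually works under the weaker hypothesis $\eta(n)\gg(\ln n)/n$ because the multiplicative Chernoff exponent is $-\Theta(\delta^2(n-1)\eta(n))$ rather than the $-\Theta(\delta^2(n-1)\eta(n)^2)$ the paper's Hoeffding form yields; the stated condition~(\ref{r0}) is exactly what makes the paper's (weaker) tail bound vanish after the union bound, whereas for you it simply provides comfortable slack. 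Both approaches are valid proofs of the theorem as stated.
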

\begin{proof}%It is well-known that in the Erdos-Reny random graph with $n$ nodes and connection probability $1/2$, each graph is picked with equal probability. So it suffices to prove the theorem in this probability space.
Let $d_i$ be the degree of node $i$ in the random network $\mathbb{G}(n,\eta(n))$. As $0<\delta<1$, the following Chernoff bound holds:
\begin{equation}Pr(|d_i-(n-1)\eta(n)|>\delta(n-1)\eta(n))\leq 2\exp\left(-\frac{\delta^2\text{{$(\eta(n))^2$}}(n-1)}{2}\right).\label{r3}\end{equation}

Let $\alpha_n$ be the number of nodes in $\mathbb{G}(n,\eta(n))$ whose degrees fall  into $[(1-\delta)(n-1)\eta(n), (1+\delta)(n-1)\eta(n)]$. That
is, if we let $I_i$ be an indicator random variable such that $I_i=1$ if $d_i\in [(1-\delta)(n-1)\eta(n), (1+\delta)(n-1)\eta(n)]$ and $I_i=0$
otherwise, then
\begin{equation*}\alpha_n=\sum_{i=1}^{n}I_i.\end{equation*}

Now, we use $\alpha_n$ to bound $|E(\mathbb{G}(n,\eta(n)))|$ and $r(\mathbb{G}(n,\eta(n)))$ as follows: \begin{equation*}
2|E(\mathbb{G}(n,\eta(n)))|\leq \alpha_n (1+\delta)(n-1)\eta(n)+(n-\alpha_n)(n-1),\end{equation*}
\begin{equation*}r(\mathbb{G}(n,\eta(n)))\geq \alpha_n (1-\delta)(n-1)\eta(n).\end{equation*}
Therefore, \begin{equation*}\frac{2|E(\mathbb{G}(n,\eta(n)))|}{r(\mathbb{G}(n,\eta(n)))}\leq \frac{\alpha_n
(1+\delta)\eta(n)+(n-\alpha_n)}{\alpha_n (1-\delta)\eta(n)},\end{equation*}
 \begin{eqnarray*}Pr\left(\frac{2|E(\mathbb{G}(n,\eta(n)))|}{r(\mathbb{G}(n,\eta(n)))}\leq  1+\epsilon\right)
 &\geq&   Pr\left(\frac{\alpha_n
(1+\delta)p+(n-\alpha_n)}{\alpha_n (1-\delta)p}\leq 1
+\epsilon\right)\\&=&Pr\left(\alpha_n\geq n\epsilon_0\right),\end{eqnarray*} where $\epsilon_0=1/(
\epsilon(1-\delta)\eta(n)-2\delta \eta(n)+1)$, which is smaller than 1 due to (\ref{r1}). Using $\alpha_n=\sum_{i=1}^nI_i$, we have\begin{eqnarray*}
Pr\left(\frac{2|E(\mathbb{G}(n,\eta(n)))|}{r(\mathbb{G}(n,\eta(n)))}\leq  1+\epsilon\right) &\geq &% Pr\left(\frac{\alpha_n (1+\delta)p+(n-\alpha_n)}{\alpha_n (1-\delta)p}\leq 1+\epsilon\right)\\ &=&Pr\left(\alpha_n\geq n\epsilon_0\right)\\ &=&
Pr\left(\sum_{i=1}^{n}\left(I_i-\epsilon_0\right)\geq 0\right)\\
&=&1-Pr\left(\sum_{i=1}^{n}\left(I_i-\epsilon_0\right)<0\right)\\
&\geq&1-Pr\left(I_i-\epsilon_0<0 \mbox{~holds for at some}~ i\in[n]\right)\\
&\geq&1-\sum_{i=1}^{n}Pr\left(I_i-\epsilon_0 <0\right)\\
&=&1-\sum_{i=1}^{n}Pr\left(I_i=0\right)\\
&=&1-\sum_{i=1}^{n}Pr\left(|d_i-(n-1)\eta(n)|>\delta (n-1)\eta(n)\right),%\\ &\geq&1-\sum_{i=1}^{n}2\exp\left(-\frac{\delta^2(n-1)p}{3}\right)\\ &\rightarrow& 1 ~~(n\rightarrow \infty),
\end{eqnarray*} %where the last inequality is due to (\ref{r3}).
{where the second last equality is due to the fact that $I_i\in\{0,1\}$.} It follows from (\ref{r3}) and (\ref{r0}) that \begin{eqnarray*}
Pr\left(\frac{2|E(\mathbb{G}(n,\eta(n)))|}{r(\mathbb{G}(n,\eta(n)))}\leq  1+\epsilon\right)
\ge 1-\sum_{i=1}^{n}2\exp\left(-\frac{\delta^2p^2(n)(n-1)}{2}\right) \rightarrow  1 ~~(n\rightarrow \infty).\end{eqnarray*} This completes the proof.
\end{proof}

\subsection{$(2-\epsilon)$-approximation for  {Barab\'asi-Albert networks}} \label{sec:ba}
The scale-free property (the power-law tail) has been nicely emulated by the multiple-destination preferential attachment growth model introduced by
Barab\'asi and Albert \cite{ba99}.  Starting with a small number of nodes (who are originally connected with each other), at each time step a new
node enters network $G=(V,E)$, and attaches to $\beta$ existing nodes.
 Each of the existing {nodes} is attached to the new one with a probability that is proportional to its current degree. {Such  a process} is
well-known as the {\it preferential attachment}.
Recall that $|V|=n$. Let $\alpha_{n,k}$ be the fraction of nodes with degree $k$. %, and define $\lambda(\beta)=2\beta(\beta+1)/(k(k+1)(k+2))$.
  It is known from \cite{brst01} that for any fixed $\epsilon>0$, and {any $\beta\le k\le n^{1/15}$},\begin{equation}\small{
 \lim_{n\rightarrow
\infty}Pr\left((1-\epsilon)\frac{2\beta(\beta+1)}{k(k+1)(k+2)}\le \alpha_{n,k}\le (1+\epsilon)\frac{2\beta(\beta+1)}{k(k+1)(k+2)}\right)=1.}
\label{t51}
\end{equation}
Note by the construction that each node has a degree of at least $\beta$. Let {$\Gamma$} be the set of all nodes that have a degree of exactly $\beta$. Then
\begin{equation}\Gamma\mbox{~ is an independent set of ~} G,\label{t52}\end{equation} because whenever two nodes are connected, the ``older" one must have a {degree}
 at least $\beta+1$. Note also that for any fixed $\epsilon>0$, {the inequality} $ |E|\le (1+\epsilon/2)n\beta$ holds for big enough $n$. %Assume w.l.o.g that the uniform intrinsic values are zero.

\begin{theorem}\label{th:ba} {Consider the PNC model.} {{For any fixed $ \epsilon>0$, with probability tending to one as $n\rightarrow \infty$,
the single pricing with price $\beta$ achieves an approximation ratio of $2-2/(2+\beta)+\epsilon$ for Barab\'asi-Albert network $G$. To be precise,
\begin{equation*}
\lim_{n\rightarrow \infty}Pr\left(\frac{\opt(G)}{n\beta}\le 2-\frac2{(2+\beta)}+\epsilon\right)=1,\end{equation*} where %$\opt(G)$ is the optimal revenue and
$n\beta $ is the revenue obtained by single price $\beta$.}}
\end{theorem}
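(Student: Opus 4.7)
The plan is to compare the single-pricing revenue against a refined upper bound on $\opt(G)$ that exploits the independent-set structure of $\Gamma$.

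\textbf{Single-pricing revenue.} Since the minimum degree is $\beta$ by construction and intrinsic values are zero, every consumer's initial total value is at least $\beta$, so the single pricing at $p=\beta$ makes all $n$ consumers purchase in the first round, yielding exactly $n\beta$.

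\textbf{Upper bound on $\opt(G)$.} For a normal optimal pricing $\mathbf{p}^{*}=(p_1,\ldots,p_\tau)$, I would start from the edge-charging underlying the proof of Theorem~\ref{2-app}: $\opt(G)\leq\sum_{e\in E}a_e(\mathbf{p}^{*})$, where $a_e(\mathbf{p})\in\{0,1,2\}$ is the contribution of edge $e=uv$ to the sum of total values at the times of purchase, taking the value $2$ exactly when both $u,v$ purchase in the same round. By (\ref{t52}), $\Gamma$ is an independent set and each $v\in\Gamma$ has degree $\beta$ with all neighbours in $V\setminus\Gamma$. A direct count shows that the $\beta$ edges incident to any $v\in\Gamma$ contribute in total $\beta+s(v)$ to the sum (when $v$ buys), where $s(v)$ is the number of $v$'s neighbours purchasing in the same round as $v$, and at most $\beta$ when $v$ does not buy. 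Summing over $\Gamma$ (the contributions are over disjoint edge-sets by independence) and adding the bound $2|E(V\setminus\Gamma)|=2(|E|-\beta|\Gamma|)$ for edges inside $V\setminus\Gamma$ gives
\[
\opt(G)\leq 2|E|-\beta|\Gamma|+\sum_{v\in\Gamma}s(v).
\]
Combining this with the paper's preceding observation $|E|\leq(1+\epsilon/4)n\beta$ for large $n$ and the limit law (\ref{t51}) at $k=\beta$, which gives $|\Gamma|/n\geq (1-\epsilon/4)\cdot 2/(\beta+2)$ with probability tending to one, the target ratio $2-2/(\beta+2)+\epsilon$ follows provided the correction $\sum_{v\in\Gamma}s(v)/(n\beta)$ is $o(1)$ with high probability.

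\textbf{Main obstacle.} Controlling this correction is the principal difficulty. A one-round pricing at $p=\beta$ already makes $\sum_{v\in\Gamma}s(v)=\beta|\Gamma|$, which is \emph{not} $o(n\beta)$; but in that case $\opt=n\beta$ trivially meets the target ratio, so the argument must dichotomize between ``near one-round'' optima (handled directly by $\opt\leq n\beta+o(n\beta)$) and genuinely iterative optima. For the latter, I plan to exploit the full power-law tail of (\ref{t51}) for $k>\beta$: neighbours of $\Gamma$-nodes have degree substantially larger than $\beta+1$ with high probability, so bundling such a neighbour $u$ with a $\Gamma$-node in the same pricing round forces a price $\leq\beta$ and \emph{wastes} the surplus $d_u-\beta$ that $u$ could have paid in an earlier, higher-price round. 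Turning this into an exchange argument on the pricing sequence should produce the required $o(n\beta)$ bound on the correction, completing the proof.
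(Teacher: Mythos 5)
Your setup mirrors the paper's: charge the revenue to edges, exploit that $\Gamma$ (degree-$\beta$ nodes) is an independent set, and aim to show edges incident to $\Gamma$ are effectively charged only once. You have correctly identified the crux — the extra term $\sum_{v\in\Gamma}s(v)$ counting same-round neighbours — but the proposal stops short of resolving it, and the two ideas you offer in its place both have problems.

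First, the dichotomy between ``near one-round'' and ``genuinely iterative'' optima is not well posed: $\sum_{v\in\Gamma}s(v)\approx\beta|\Gamma|$ does not imply the optimal pricing is a single round or that $\opt\le n\beta+o(n\beta)$; it only constrains the rounds in which $\Gamma$-nodes buy, and the remaining (potentially many) rounds involving the high-degree core could still accumulate large revenue. Second, the probabilistic exchange idea is left unexecuted, and its premise is shaky: a constant fraction of all nodes have degree exactly $\beta+1$, so neighbours of $\Gamma$-nodes do \emph{not} generically have degree ``substantially larger than $\beta+1$''. The missing ingredient is a \emph{deterministic} structural fact about the Barab\'asi--Albert process: if $u\notin\Gamma$ has $|E_u|$ neighbours in $\Gamma$, then every such $\Gamma$-neighbour must be younger than $u$ (an older one, having been attached to by $u$, would have degree $>\beta$), so each of those $|E_u|$ attachments raised $u$'s degree beyond its initial $\beta$, giving $d(u)\ge\beta+|E_u|$. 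This surplus is exactly what the paper uses: whenever $u$ and $v\in\Gamma$ buy simultaneously at price $p\le\beta$, $u$ has enough uncharged spare edges outside $\delta(\Gamma)$ to receive the extra unit of charge, so one can always rebalance the charge function to $c'$ with $c'(e)\le1$ on $\delta(\Gamma)$ and $c'(E)=c(E)$, yielding $\opt(G)\le 2|E|-|\delta(\Gamma)|=2|E|-\beta|\Gamma|$ unconditionally, with no case split. Your $\sum s(v)$ term disappears entirely once this rebalancing is in place; without it, the argument is genuinely incomplete.
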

\begin{proof}%[of Theorem \ref{th:ba}]
Given an optimal pricing {sequence} $\mathbf p$ for $G=(V,E)$, we construct a charge $c$ on $E$ as follows: At the time a node $u\in V$ {purchases} with price $p$,
{it must have} at least $p$ neighbors, say $v_1,\ldots,v_p$, who have not purchased. We {\em charge} each link $uv_i$ ($1\le i\le p$) with 1. After  the
charge operation is conducted for all nodes,  each link $e\in E$ is charged at most twice (i.e. receives charge at most 2). Define $c(e)=0$ if $e$ is
not charged, $c(e)=1$ if $e$ is charged once, and $c(e)=2$ if $e $ is charged twice. Note that $c(e)=2$ only if the both ends of $e$ purchase at the
same {time (under the same price)}. The charge function $c:E\rightarrow\{0,1,2\}$ satisfies the property that {$c(E)%\equiv\sum_{e\in E}c(e)
=\sum_{p\in\mathbf p}p|B(p)|$.} {For
$i=1,2$, let $C_i$ consist of links {$e\in E$} with $c(e)=i$.}

%Asymptotically we consider $n\rightarrow\infty$.
 %Let $\Gamma$ denote the set of nodes with degree $\beta$. It is easy to see that $\Gamma$ is an independent set of $G$,
Recall the definition of $\Gamma$ {given above (\ref{t52})}. We denote by $\delta(\Gamma)$ the set of links that are covered by $\Gamma$, and $S$ the set of nodes dominated by $\Gamma$. For each
$u\in S$, let $\delta(u)$ denote the set of links covered by $u$. It is straightforward that
\begin{equation}\delta(\Gamma)\mbox{~is the disjoint union of all~}E_u\equiv\delta(u)\cap \delta(\Gamma), u\in S.\label{t55}\end{equation}
 We also know from (\ref{t51}) and (\ref{t52}) that
 %\begin{equation}|\delta(\Gamma)|=\sum_{v\in \Gamma}d(v)\rightarrow 2n\beta/(\beta+2) \text{ as }n\rightarrow \infty.\label{t53}\end{equation}
 \begin{equation}
 \lim_{n\rightarrow \infty}Pr\left(|\delta(\Gamma)|=\sum_{v\in \Gamma}d(v)\ge (1-\epsilon/2)\frac{2n\beta}{\beta+2}\right)=1\label{t53}
 \end{equation}
 {For any node $u\in S$ with nonempty  {$E_u\cap C_2$}, considering any
 {$uv\in E_u\cap C_2$}, we see that $u$ and $v$ {($\in \Gamma$) purchase under} the same price $p\le d(v)=\beta$. Since $d(u)\ge\beta+|E_u|\ge p+|E_u\cap
C_1|+|E_u\cap C_2|$ and  {$|\delta(u)\setminus C_2|\ge d(u)-p$}, we have  {$|\delta(u)\setminus C_2\setminus(E_u\cap C_1)|=|\delta(u)\setminus
C_2|-|E_u\cap C_1|\ge|E_u\cap C_2|$}. It follows that
\begin{center}
For each $u\in S$, there is a subset $F_u$ of  {$\delta(u)\setminus C_2\setminus(E_u\cap C_1)$} with $|F_u|=|E_u\cap C_2|$ links.
\end{center}
As $F_u$ is disjoint from both $C_2$ and $E_u\cap C_1$, we have $c(e)\le1$ for any $e\in F_u$, and $c(e)=0$ for any $e\in F_u\cap E_u$. This enables
us to modify $c$ to be another charge function $c':E\rightarrow\{0,1,2\}$ such that $c'(E)=c(E) $ and $c'(e)\le1$ for every $e\in\delta(\Gamma)$ as
follows. For each $u\in S$, we increase the charge of each link in $F_u$  by 1, and decrease the charge of each link in $E_u\cap C_2$ by 1. The
resulting charge $c'$ is as desired because, as (\ref{t55}) implies, $\delta(\Gamma)\cap C_2$ is the disjoint union of $E_u\cap C_2$ for all $u\in S$.
Therefore, we obtain  \[\opt(G)=\sum_{p\in\mathbf p}p|B(p)|=c(E)=c'(E)\le2|E|-|\delta(\Gamma)|.\] Using (\ref{t53}), we have with probability tending to 1
(as $n\rightarrow \infty$)\[\opt(G)\le 2(1+\epsilon/2) \beta n-(1-\epsilon/2)\frac{2n\beta}{\beta+2}\le
\left(2-\frac2{\beta+2}+\epsilon\right)n\beta.\]
%{\[\opt(G)=\sum_{p\in\mathbf p}p|B(p)|=c(E)=c'(E)\le2|E|-|\delta(\Gamma)|\le (1+\epsilon/2)2\beta n-(1-\epsilon/2)\frac{2n\beta}{\beta+2}\le\left(1-\frac1{\beta+2}+\epsilon\right)n\beta,\]}
%\[\opt(G)=\sum_{p\in\mathbf p}p|B(p)|=c(E)=c'(E)\le2|E|-|\delta(\Gamma)|\approx2n\beta\left(1-\frac1{\beta+2}\right),\]where the second last inequality is from (\ref{t53}).
%``$\approx$'' is guaranteed by (\ref{t53}).
Observing finally that the single pricing with price $\beta$ obtains revenue $n\beta$ completes the proof.}
\end{proof}

In the special case of $\beta=1$, Barab\'asi-Albert network $G$ is a tree. The approximation ratio $2-2/(\beta+2)=4/3$ stands in contrast to the
ratio 1.5 in Theorem \ref{forest} and Remark \ref{tight1}.

\subsection{Upper and lower bounds for single pricing}\label{upplow}
{Having seen the above constant approximations that single pricing achieves, one may ask: can {best} single pricing always provide good approximations to optimal iterative pricing {for unweighted {networks} with uniform intrinsic values}? The}  following  example shows that, in the worst case,  the best single price can only guarantee at most a fraction $1/({\ln\ln
n})$ of the optimal revenue.
\begin{example}\label{lowerbound}
The network $G$ with $n=k( k!)+1$ nodes   consists of $\sum_{i=1}^ki=k(k+1)/2$ node-disjoint cliques and one special node which is adjacent to all
other nodes, where the number of  $(k!/i)$-cliques is $i$ for each $1\le i\le k$.
\end{example}
%Now we use $M$ to represent $k!$, i.e. $M=k!$.
In the above instance $G$, there are one node with degree $k(k!)$, which is the special node, and $k!$ nodes with degree $(k!)/i $  for
$i=1,2,\ldots,k$. Recall that $\rv(p)$ denote the revenue under single pricing {$(p)$}. Note that $\rv(k( k!))=k( k!)$, and  $\rv((k!)/i)=(i(k!)+1)\cdot
(k!)/i=(k!)^2+(k!)/i$ for $i=1,\ldots,k$. Then  the best single price is $k!$, which brings a revenue \[\rv(k!)=(k!)^2+k!=\max_{p\ge0}\rv(p).\] On
the other hand the pricing $\mathbf p=(p_1,\ldots,p_{k+1})$ with $p_1=k(k!), p_{i+1}=(k!)/i   $, $i=1,\ldots,k$, brings revenue   $\rv(\mathbf
p)=k(k!)+\sum_{i=1}^{k}(k!)(k!/i-1)=(k!)^2\cdot\sum_{i=1}^{k}(1/i)$. When $k$ tends to infinity,
\begin{equation*}
\frac{\rv(\mathbf p)}{\rv(k!)}=\frac{\sum_{i=1}^k\frac1i}{1+o(1)}\approx1+\ln k=\Theta(\ln\ln n).
\end{equation*}

In complementary to the above example, we show in the following theorem that, with single pricing, one can always assure at least a factor
$1/({1+\ln n})$ of the optimal revenue in unweighed network $G$ {with uniform intrinsic values}. Let $d_1,d_2,\ldots,d_n$ with $d_1\ge d_2\ge\cdots\ge d_n$ be the degree sequence of $G$. %Let $\opt(G)$ denote the revenue derived from an optimal iterative pricing.
\begin{theorem}\label{upperbound}
$\opt(G)/\max_{i=1}^n\{id_i\}\le1+\ln n$.
\end{theorem}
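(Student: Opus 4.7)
The plan is to analyze an optimal (irredundant, hence strictly decreasing) pricing sequence $\mathbf{p}=(p_1,\ldots,p_\tau)$ and charge its revenue against the quantity $M:=\max_i\{id_i\}$, exploiting the fact that any consumer buying at price $p_t$ must have original degree at least $p_t$ (since intrinsic values are zero and removing neighbors only decreases a consumer's value from its degree).

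Let $b_t=|B(p_t)|$ and let $k_t=b_1+\cdots+b_t$ be the cumulative number of buyers through round $t$, with $k_0=0$. Because prices are strictly decreasing along an irredundant pricing sequence (Observation~\ref{distinct}), every consumer who has purchased by round $t$ did so at a price at least $p_t$, and therefore has original degree at least $p_t$. Consequently, $G$ contains at least $k_t$ nodes of degree $\ge p_t$, which forces $d_{k_t}\ge p_t$. Combined with the definition of $M$, this yields the key bound
\begin{equation*}
p_t\le d_{k_t}\le \frac{M}{k_t}.
\end{equation*}

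From here the argument is a standard telescoping/integral estimate. For the first round, $p_1b_1\le d_{k_1}k_1\le M$. For each $t\ge 2$, I would write
\begin{equation*}
p_tb_t\le M\cdot\frac{k_t-k_{t-1}}{k_t}\le M\int_{k_{t-1}}^{k_t}\frac{dx}{x}=M\bigl(\ln k_t-\ln k_{t-1}\bigr),
\end{equation*}
using that $1/x\ge 1/k_t$ on $[k_{t-1},k_t]$. Summing over $t$ and telescoping gives
\begin{equation*}
\opt(G)=\sum_{t=1}^\tau p_tb_t\le M+M(\ln k_\tau-\ln k_1)\le M(1+\ln n),
\end{equation*}
since $k_1\ge 1$ (the pricing is irredundant) and $k_\tau\le n$. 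Dividing through by $M=\max_i\{id_i\}$ yields the claimed inequality.

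The proof is short and the only mildly delicate step is the observation $d_{k_t}\ge p_t$, which relies crucially on the monotonicity of an irredundant pricing sequence and on the fact that a consumer's value never exceeds her original unweighted degree; the remaining steps are a routine harmonic/logarithmic estimate that is essentially tight against Example~\ref{lowerbound}.
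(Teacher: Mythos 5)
Your proof is correct, but it takes a genuinely different route from the paper's. The paper argues in two detached steps: first the crude bound $\opt(G)\le\sum_{i=1}^n d_i$ (each purchase pays at most that consumer's degree), and then a one-line contradiction showing $\max_i\{id_i\}\ge\sum_i d_i/(1+\ln n)$: if $id_i<\bigl(\sum_j d_j\bigr)/(1+\ln n)$ for every $i$, summing gives $\sum_i d_i<\bigl(\sum_i\tfrac1i\bigr)\cdot\tfrac{\sum_j d_j}{1+\ln n}$, i.e.\ $1+\ln n<\sum_{i=1}^n\tfrac1i$, a contradiction. In that argument the pricing process disappears entirely once $\opt\le\sum d_i$ is in hand. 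You instead keep the dynamics in view: you charge revenue round by round, observing that every consumer who has bought by round $t$ has original degree at least $p_t$ (values only decrease, and prices only decrease along an irredundant sequence), so that $d_{k_t}\ge p_t$ and hence $p_t\le M/k_t$ with $M=\max_i\{id_i\}$, after which a telescoping integral estimate finishes the job. Both proofs live or die by the harmonic bound $H_n\le 1+\ln n$; the paper's is slightly shorter, while yours is structurally tighter and makes transparent where the logarithm comes from (a Riemann-sum comparison against $1/x$, with the cumulative buyer counts $k_t$ as the partition points). Your argument does carry a couple of small hygiene obligations the paper avoids — restricting to an irredundant (hence strictly decreasing) optimal sequence via Observation~\ref{distinct}, and treating $t=1$ separately so the integral's lower limit is positive — but you handle both correctly.
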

\begin{proof}
Since $\sum_{i=1}^{n}d_i\ge \opt(G )$, it suffices to show that \[\max_{i=1,\cdots,n}\{id_i\}\ge\sum_{i=1}^{n}\frac{d_i}{1+\ln n}.\] Suppose on the contrary that
$id_i<\frac{\sum_{j=1}^{n}d_j}{1+\ln n}$  for each $1\le i\le n$. Then we have \begin{eqnarray*}
\sum_{i=1}^{n}d_i<\left(\sum_{i=1}^{n}\frac{1}{i}\right)\cdot \frac{\sum_{i=1}^{n}d_i}{1+\ln n}\Longrightarrow 1+\ln n<\sum_{i=1}^{n}\frac{1}{i},
\end{eqnarray*}
which is a contradiction.
\end{proof}

\section{Conclusion}  \label{sec:conclude}
Our work is an addition to the very limited literature on both pricing with negative network externalities and iterative pricing.
The {model {captures} many {interesting} settings in real-world marketing, and is
usually much more challenging than the positive externality counterpart. The hardness result identifies complexity status of  a fundamental pricing
problem.} The algorithms achieve satisfactory performances in general and several important special settings. {An interesting direction for future research is to narrow the lower and upper bounds on the approximability of the iterative pricing problem with negative externality. Obtaining more accurate estimations for the optimal pricing is a key to reduce the approximation ratios.}

\bibliographystyle{plain}
\bibliography{pricing}

\begin{thebibliography}{10}

\bibitem{Akhlaghpour10}
Hessameddin Akhlaghpour, Mohammad Ghodsi, Nima Haghpanah, Vahab~S. Mirrokni,
  Hamid Mahini, and Afshin Nikzad.
\newblock Optimal iterative pricing over social networks (extended abstract).
\newblock In {\em WINE}, pages 415--423, 2010.

\bibitem{amt13}
Noga Alon, Yishay Mansour, and Moshe Tenneholtz.
\newblock Differential pricing with inequity aversion in social networks.
\newblock In {\em Proceedings of the fourteenth ACM conference on Electronic
  commerce}, pages 9--24. ACM, 2013.

\bibitem{ba99}
Albert-L{\'a}szl{\'o} Barab{\'a}si and R{\'e}ka Albert.
\newblock Emergence of scaling in random networks.
\newblock {\em Science}, 286(5439):509--512, 1999.

\bibitem{bateni2013revenue}
MohammadHossein Bateni, Nima Haghpanah, Balasubramanian Sivan, and Morteza
  Zadimoghaddam.
\newblock Revenue maximization with nonexcludable goods.
\newblock In {\em Web and Internet Economics}, pages 40--53. Springer, 2013.

\bibitem{Bhattacharya2011}
Sayan Bhattacharya, Janardhan Kulkarni, Kamesh Munagala, and Xiaoming Xu.
\newblock On allocations with negative externalities.
\newblock In {\em Proceedings of the 7th International Conference on Internet
  and Network Economics}, WINE'11, pages 25--36, Berlin, Heidelberg, 2011.
  Springer-Verlag.

\bibitem{bq13}
Francis Bloch and Nicolas Qu\'{e}rou.
\newblock Pricing in social networks.
\newblock {\em Games and Economic Behavior}, 80:243--261, 2013.

\bibitem{brst01}
B{\'e}la Bollob{\'a}s, Oliver Riordan, Joel Spencer, and G{\'a}bor Tusn{\'a}dy.
\newblock The degree sequence of a scale-free random graph process.
\newblock {\em Random Structures \& Algorithms}, 18(3):279--290, 2001.

\bibitem{b07}
Yann Bramoull{\'e}.
\newblock Anti-coordination and social interactions.
\newblock {\em Games and Economic Behavior}, 58(1):30--49, 2007.

\bibitem{bk07}
Yann Bramoull\'{e} and Rachel Kranton.
\newblock Public goods in networks.
\newblock {\em Journal of Economic Theory}, 135(1):478 -- 494, 2007.

\bibitem{cbo12}
Ozan Candogan, Kostas Bimpikis, and Asuman Ozdaglar.
\newblock Optimal pricing in networks with externalities.
\newblock {\em Operations Research}, 60(4):883--905, 2012.

\bibitem{c09}
Ning Chen.
\newblock On the approximability of influence in social networks.
\newblock {\em SIAM Journal on Discrete Mathematics}, 23(3):1400--1415, 2009.

\bibitem{clstwz11}
Wei Chen, Pinyan Lu, Xiaorui Sun, Bo~Tang, Yajun Wang, and Zeyuan~Allen Zhu.
\newblock Optimal pricing in social networks with incomplete information.
\newblock In {\em Internet and Network Economics}, pages 49--60. Springer,
  2011.

\bibitem{deng2011money}
Changrong Deng and Sa\v{s}a Peke\v{c}.
\newblock Money for nothing: exploiting negative externalities.
\newblock In {\em Proceedings of the 12th ACM conference on Electronic
  commerce}, pages 361--370. ACM, 2011.

\bibitem{farrell1985}
Joseph Farrell and Garth Saloner.
\newblock Standardization, compatibility, and innovation.
\newblock {\em The RAND Journal of Economics}, pages 70--83, 1985.

\bibitem{fklp13}
Michal Feldman, David Kempe, Brendan Lucier, and Renato Paes~Leme.
\newblock Pricing public goods for private sale.
\newblock In {\em Proceedings of the fourteenth ACM conference on Electronic
  commerce}, pages 417--434. ACM, 2013.

\bibitem{haghpanah2013optimal}
Nima Haghpanah, Nicole Immorlica, Vahab Mirrokni, and Kamesh Munagala.
\newblock Optimal auctions with positive network externalities.
\newblock {\em ACM Transactions on Economics and Computation}, 1(2):13, 2013.

\bibitem{hms08}
Jason Hartline, Vahab Mirrokni, and Mukund Sundararajan.
\newblock Optimal marketing strategies over social networks.
\newblock In {\em Proceedings of the 17th international conference on World
  Wide Web}, pages 189--198. ACM, 2008.

\bibitem{jehiel1996not}
Philippe Jehiel, Benny Moldovanu, and Ennio Stacchetti.
\newblock How (not) to sell nuclear weapons.
\newblock {\em The American Economic Review}, pages 814--829, 1996.

\bibitem{katz1985network}
Michael Katz and Carl Shapiro.
\newblock Network externalities, competition, and compatibility.
\newblock {\em The American economic review}, pages 424--440, 1985.

\bibitem{kkt03}
David Kempe, Jon Kleinberg, and {\'E}va Tardos.
\newblock Maximizing the spread of influence through a social network.
\newblock In {\em Proceedings of the ninth ACM SIGKDD international conference
  on Knowledge discovery and data mining}, pages 137--146. ACM, 2003.

\bibitem{mr07}
Elchanan Mossel and Sebastien Roch.
\newblock On the submodularity of influence in social networks.
\newblock In {\em Proceedings of the thirty-ninth annual ACM symposium on
  Theory of computing}, pages 128--134. ACM, 2007.

\bibitem{radner2014dynamic}
Roy Radner, Ami Radunskaya, and Arun Sundararajan.
\newblock Dynamic pricing of network goods with boundedly rational consumers.
\newblock {\em Proceedings of the National Academy of Sciences},
  111(1):99--104, 2014.

\end{thebibliography}

 \bigskip
\appendix
\section*{Appendix: Proof of Theorem \ref{hard}}
{By reduction from the 3SAT problem}, we prove that finding an optimal pricing sequence for the PNC model is NP-hard, even when {the underlying network is an unweighted tree without intrinsic values.}

\section{Construction}

 {Let $I$ be an arbitrary instance of the 3SAT problem, whose input is given by $n$ boolean variables $x_1,x_2,\cdots,x_n$, and $m$ clauses
{$c^j=(x^{j1}\vee x^{j2}\vee x^{j3})$}, $ 1\leq j\leq m$,} where $x^{j\ell}$ is a literal taken from $\{x_1,x_2,\cdots,x_n,\neg x_1,\neg
x_2,$  {$\cdots, \neg x_n\}$}, $1\leq
j\leq m, 1\leq \ell\leq 3$.
To avoid triviality, we assume that
\begin{equation}\label{avoid}
\text{{$m\ge5$}, and for each $i\in[n]$, there exist $j,j'\in[m]$ such that $x_i\in c^j$ and $\neg x_i\in c^{j'}$.}\end{equation}

{From the 3SAT instance  $I$}, we construct an instance   of the {PNC} problem on {tree} $G=(V,E)$ with {unit} weight $w|_{E}=\bfm1$ {and all intrinsic values zero} as follows. Let $R$ (resp. $\bar R$) denote the set of ordered pairs $(i,j)$ such that $i\in[n],j\in[m]$ and $x_i\in c^j$ (resp. $\neg x_i\in c^j$). Clearly, $|R|+|\bar R|=3m$. Let $k=k_0, k_1,k_2,\ldots,k_n$ be integers {satisfying
\begin{equation}\label{e1}
k_i\ge k_{i-1}+6m, \text{ for }i=1,2,\ldots,n,\text{ and }k\ge6m .
\end{equation}
Tree $G=(V,E)$} has    $|V|<3m(k_n^3)+m(k^2+k+1)+(9mk_n^3)$ nodes in total, where $V$ is the disjoint union of the node sets of $3m$ variable
gadgets, $m$ clause gadgets and one connection gadget.
\begin{itemize}
\item For every $(i,j)\in R$, i.e., $x_i\in c^j$, there is a {\em variable gadget} $X_i^j=(V_i^j,E_i^j)$ which is a tree rooted at node $x_i^j$  (see   Figure \ref{Figure.1}(a)). Node set $V_i^j$ with $|V_i^j|=k_i^3-2k_i^2+1<k_n^3$ is the disjoint union of {four sets} $\{x_i^j\}$, $V_{i1}^j$, $V_{i2}^j$ and {$V_{i3}^j$}, where $V_{ih}^{j}$, $h=1,2,3$, consists of nodes  in $X_i^j$ at distance $h$ from $x_i^j$.
 \begin{itemize}
 \item {\em Literal node} $x_i^j$, which simulates literal $x_i$, has degree $ k_i-2$ in $X_i^j$.
 \item $V_{i1}^j$ consists of the $k_i-2$ neighbors of $x_i^j$ in $X_i^j$, all having degree $k_i$.
 \item $V_{i2}^j$ consists of $(k_i-2)(k_i-1)$ nodes, all having degree $k_i+1$.
\item $V_{i3}^j$ consists of the $(k_i-2)(k_i-1)k_i$  leaves of  $X_i^j$.
\end{itemize}

\item For every $(i,j)\in\bar R$, i.e., $\neg x_i\in c^j$, there is a {\em variable gadget} $\bar X_i^j=(\bar V_i^j,\bar E_i^j)$ which is a tree rooted at node $\bar x_i^j$  (see   Figure \ref{Figure.1}(b)). Node set $\bar V_i^j$ with  $|\bar V_i^j|=k_i^2<k_n^3$ is the disjoint union of {three sets} $\{\bar x_i^j\}$, $\bar V_{i1}^j$ and  $\bar V_{i2}^j$, where $\bar V_{ih}^j$, $h=1,2$, consists of nodes in $\bar X_i^j$ at distance $h$ from $\bar x_i^j$.
\begin{itemize}
\item {\em Literal node} $\bar x_i^j$, which simulates literal $\neg x_i$, has degree $ k_i-1$ in $\bar X_i^j$.
\item $\bar V_{i1}^j$ consists of the $k_i-1$ neighbors of $\bar x_i^j$ in $\bar X_i^j$, all having degree $k_i+1$.
\item $\bar V_{i2}^j$ consists of the $(k_i-1)k_i$   leaves of $\bar X_i^j$.
\end{itemize}

\item For each clause $c^j$, there is a {\em clause gadget} $C^j=(V^j,E^j)$ which is a tree rooted at {node} $c^j$ (see   Figure~\ref{Figure.1}(c)). Node set $V^j $ with $|V^j|=k^2+k+1$ is the disjoint union of {three sets} $\{c^j\}$, $V^{j1}$ and {$V^{j2}$}, where $V^{jh}$, $h=1,2$, consists of nodes in $C^j$ at distance $h$ from $c^j$.
 \begin{itemize}
 \item {\em Clause node} $c^j $, which simulates the clause, has degree $k$ in $C^j$.
  \item $V^{j1}$ consists of the $k$ neighbors of $c^j$ in $C^j$, all having degree $k+1$.
\item $V^{j2}$ consists of the $k^2$   leaves of $  C^j$.
\end{itemize}

\item For any literal $x_i$ and clause $c^j$ with $x_i\in c^j$, there is a link joining literal node $x_i^j$ and clause node $c^j$. For any literal $\neg x_i$ and clause $c^j$ with $\neg x_i\in c^j$, there is a link joining literal node $\bar x_i^j$ and clause node $c^j$.
\item  There is a {\em connection gadget} $S=(V_S,E_S)$ which is a star centered at node~$s$. The {\em connection node} $s$ has degree $ 9mk_n^3-1$ in $S$, and is adjacent to every clause node of~$G$.
\end{itemize}

\begin{figure}[h!]
\begin{center}
\includegraphics[width=16cm]{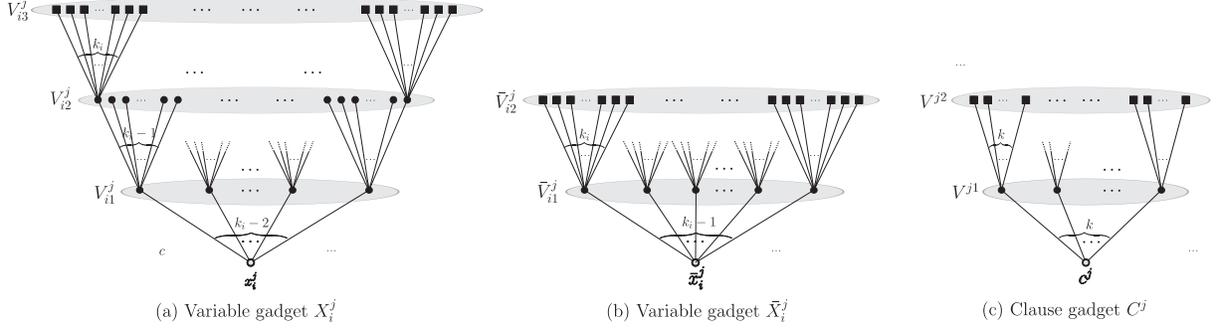}
\caption{%Variable gadget $V_i$ (left) and clause gadget $C_j$ (right).
Literal nodes and clause nodes are represented by {circles}, while auxiliary nodes are represented by {solid disks} and (pendant) squares. }
\label{Figure.1}
\end{center}\end{figure}

Obviously, the above construction of $G=(V,E)$ can be done in polynomial time. It is easy to check that $G$ is a tree. In particular, all the $4m+1$
node-disjoint  gadgets {(recall that $|R|+|\bar R|=3m$)} are connected by $4m$ links adjacent to clause nodes, where each clause node $c^j$ has exactly four neighbors outside $C^j$,
three being literal nodes and one being the connection node $s$.  Let $E_{cl}$ denote the {set of} $3m$ links connecting clause nodes and literal nodes, and
$E_{cs}$ denote the {set of} $m$ links connecting clause nodes and {connection node} $s$. Then
\begin{align}
&\text{$V$ is the disjoint union of $ \bigcup_{(i,j)\in R}V_i^j,\bigcup_{(i,j)\in \bar R}\bar V_i^j,\bigcup_{j\in[m]}V^j$ and $V_S$.}\label{node}\\
&\text{$E$ is the disjoint union of $ \bigcup_{(i,j)\in R}E_i^j,\bigcup_{(i,j)\in \bar R}\bar E_i^j,\bigcup_{j\in[m]}E^j$, $E_S$, $E_{cl}$, and
$E_{cs}$.}\label{edge}
\end{align}

For any {node $v\in V$}, let {$d(v)$}   denote the degree of $v$ in $G$. The connection node $s$ has degree
\begin{equation}\label{es}
d(s)=|E_S|+|E_{cs}|=9mk_n^3-1+m>9mk_n^3.\end{equation}
%{[zhigang: note that in the above formula $=|E_S|+|E_{cs}|$ is newly added. Also, I think the inequality ``$>9mk_n^3$" is useless.]}

Note that each literal node   has exactly one neighbor outside the variable gadget containing it, which is a
clause node. Therefore, for any {$(i,j)\in R$}, we have
{\begin{align}
d(x_i^j)=k_i-1, \;\; d(v)=k_i\text{ for all }v\in V_{i1}^j,\; \text{ and }   d(v)=k_i+1\text{ for all }v\in V_{i2}^j.\label{ev1}\end{align}}

For  any  {$(i,j)\in \bar{R}$}, we have
\begin{align} d(\bar x_i^j)=k_i, \text{ and }d(v)=k_i+1\text{ for all }v\in \bar V_{i1}^j.\label{ev2}
\end{align}
%Each clause node $c^j$ has exactly four neighbors outside $C^j$, three being literal nodes and one being the connection node $s$. Thus, for

For every $j\in[m]$, we have

\begin{equation}\label{ec}
d(c^j)=k+4,\text{ and }d(v)=k+1\text{  for all }v\in V^{j1}.\end{equation}

Other nodes, i.e., those {not} mentioned in (\ref{es}) -- (\ref{ec}), are exactly
leaves of $G$. It is worthwhile noting from (\ref{e1}) that  all the non-leaf consumers in the variable gadgets have much larger degrees than the
non-leaf consumers in the clause gadgets. This   permits us to consider the former consumers before the latter ones.

\section{Pricing}
Given {any integer}   pricing sequence $\mathbf{p}=(p_1,p_2,\ldots, p_\tau)$ for $G$, let $\xi=\min\{t: p_t\in \mathbf{p}\text{ and }p_t\leq k+3\}$ be the first time
that the price is equal to or lower than  $k+3$. Note that if $p_1=d(s)$, then {at time 1, only $s$ purchases, and} before time $\xi$, no consumer in the clause gadgets has purchased,
i.e. \begin{equation}\label{early}B_{\xi-1}(\mathbf p)\cap (\cup_{j=1}^m V^j)=\emptyset.\end{equation}

 %The following claims are critical.

\begin{clm}\label{c4}
 {Let $D^j=\{x_i^j:x_i\in c^j,i\in[n]\}\cup\{\bar x_i^j:\neg x_i\in
c^j,i\in[n]\}$}. If   $p_1=d(s)$, then the following holds for each {$j\in [m]$}.
\begin{mylabel}
\item   {E}ither $\rv_{C^j}(\mathbf p)\le  |E^j|+3${,}  or {$|E^j|+k+1\le\rv_{C^j}(\mathbf p)\le|E^j|+k+3$.}
\item   {If $\rv_{C^j}(\mathbf p)\ge |E^j|+k+1$  then
$D^j \setminus B_{\xi-1}(\mathbf p) \neq \emptyset$ and $\mathbf p\cap[k+1,k+3]\ne\emptyset$}. % and $\mathbf{p}\cap [k,k+3]=\{k+1\}$,
%where $D^j=\{x_i^j:x_i\in c^j,i\in[n]\}\cup\{\bar x_i^j:\neg x_i\in c^j,i\in[n]\}$.
\item {If $D^j \setminus B_{\xi-1}(\mathbf p) \neq \emptyset$ and $ \mathbf p\cap[k,k+3]=\{k+1\}$, then $\rv_{C^j}(\mathbf p)= |E^j|+k+1$.}
\end{mylabel}
\end{clm}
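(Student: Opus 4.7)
The plan is to treat (i), (ii), (iii) uniformly via a careful case analysis of how the nodes of $C^j$ purchase, exploiting the assumption $p_1=d(s)$, the structural symmetry within $C^j$, and the fact that only a short list of revenues from $C^j$ is actually achievable. First, since $d(s)$ strictly exceeds every other total value in $G$ by (\ref{es}), only $s$ purchases at time $1$; and for rounds $2,\dots,\xi-1$ the posted prices exceed $k+3$, which in turn exceeds the current total value of every node in $C^j$: $c^j$ has value at most $d(c^j)-1=k+3$ (since $s$ has already left), each $v\in V^{j1}$ has value $k+1$, and each leaf in $V^{j2}$ has value $1$. Hence $B_{\xi-1}(\mathbf p)\cap V^j=\emptyset$, and writing $h_j=|B_{\xi-1}(\mathbf p)\cap D^j|\in\{0,1,2,3\}$, at time $\xi$ the current values in $C^j$ are exactly $k+3-h_j$ for $c^j$, $k+1$ for each $v\in V^{j1}$, and $1$ for each leaf.

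By symmetry all nodes of $V^{j1}$ evolve identically, as do all leaves, so each of these two sets purchases in lockstep; and since prices are strictly decreasing (Observation~\ref{distinct}), a leaf can purchase only at price $1$, and only if its parent in $V^{j1}$ has not yet purchased, in which case at price $1$ the remaining leaves, all of $V^{j1}$, and $c^j$ (if still unsold) purchase simultaneously. Using also that each $v\in V^{j1}$ earns $k+1$ only before $c^j$ purchases (and at most $k$ afterwards), while $c^j$ earns at most $k+3-h_j$ before $V^{j1}$ (and at most $3-h_j$ after), the trajectories reduce to: (A) $c^j$ and $V^{j1}$ purchase simultaneously at some $p\le\min(k+1,k+3-h_j)$; (B) $c^j$ alone at some $p_c\in\{k+2,k+3\}$ with $h_j\le k+3-p_c$, then $V^{j1}$ at $p_v\le k$, with leaves joining iff $p_v=1$; (C) $V^{j1}$ alone at $k+1$, then $c^j$ at price $\le 3-h_j$; or (D) one of $c^j,V^{j1}$ never purchases. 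A routine revenue calculation per case shows $\rv_{C^j}(\mathbf p)\le|E^j|+3$ unless we land in one of three specific configurations, yielding respectively: scenario (A) at $p=k+1$ with $h_j\le 2$, giving $(k+1)^2=|E^j|+k+1$; scenario (B) with $p_c=k+2$, $h_j\le 1$, $p_v=1$, giving $k+2+k+k^2=|E^j|+k+2$; and scenario (B) with $p_c=k+3$, $h_j=0$, $p_v=1$, giving $k+3+k+k^2=|E^j|+k+3$. This proves (i).

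For (ii), in each of the three high configurations above we have $h_j\le 2$ (so $D^j\setminus B_{\xi-1}(\mathbf p)\ne\emptyset$) and $\mathbf p$ contains one of $k+1,k+2,k+3$ (so $\mathbf p\cap[k+1,k+3]\ne\emptyset$), which is the contrapositive. For (iii), the hypothesis $\mathbf p\cap[k,k+3]=\{k+1\}$ both forces $p_\xi=k+1$ (the unique element of $\mathbf p$ in $[k,k+3]$, and the largest element of $\mathbf p$ not exceeding $k+3$) and excludes $k+2,k+3$ from $\mathbf p$, ruling out scenario (B); combined with $h_j\le 2$ from $D^j\setminus B_{\xi-1}(\mathbf p)\ne\emptyset$, we are driven into scenario (A) at $p_\xi=k+1$, where $c^j$ (value $k+3-h_j\ge k+1$) and all of $V^{j1}$ (value $k+1$) purchase simultaneously. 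Afterwards the leaves have value $0$ and contribute nothing, so $\rv_{C^j}(\mathbf p)=(k+1)(k+1)=|E^j|+k+1$ exactly.

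The step I expect to require the most care is the enumeration in scenario (B): it is easy to miss that, when $c^j$ purchases alone at $k+2$ or $k+3$ and the next price in $\mathbf p$ relevant to $C^j$ drops all the way to $1$, the simultaneous purchase of $V^{j1}$ and all leaves at price $1$ yields the extra $k+k^2$ units of revenue that push the total into the high range. Once this observation is in place, the remaining verifications are straightforward accounting based on decreasing prices and the local degrees in $C^j$.
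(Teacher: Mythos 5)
Your proof is correct and takes essentially the same route as the paper's: both hinge on the observation that after $s$ purchases at time~$1$, every node in $C^j$ has value bounded by $k+3$ (so $V^j\cap B_{\xi-1}(\mathbf p)=\emptyset$), followed by a case analysis of when $c^j$ and $V^{j1}$ purchase relative to each other. The paper organizes its cases by the price at which $c^j$ purchases ($\{k+2,k+3\}$, $k+1$, or $\le k$/never), whereas you organize by the relative order of $c^j$ and $V^{j1}$, but the resulting enumeration and revenue accounting are the same; your version is a bit more explicit about the three "high" configurations and is arguably easier to audit.
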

\begin{proof}
{Recall that $\nu_t(v,\mathbf p)$  denotes the (total) value of the product at time $t$ for consumer $v\in V$. As $p_1= d(s)=\nu_1(s,\mathbf p)>d(v)=\nu_1(v,\mathbf p)$ for all $v\in V\setminus\{s\}$, only consumer $s$ purchases at time 1. Then for any $t\in[2,\tau]$  it holds that $\nu_t(c^j,\mathbf p)\le\nu_2(c^j,\mathbf p)=k+3$.% and $\nu_t(v,\mathbf p)\le\nu_2(v,\mathbf p)=k+1$ for all $v\in V^{j1}$
 Thus $c^j$ can only purchase at a price no greater than $k+3$.}

{In case of $c^j$ purchasing at some price in $[k+2,k+3]$, either  all consumers in $ V^{j1}$ purchase at some price within $[2,k]$, in which case $\rv_{C^j\setminus\{c^j\}}(\mathbf p)\le k^2=|E^j|-k$; or   all consumers in $C^j\setminus\{c^j\}$ purchase at price 1, in which case $\rv_{C^j\setminus\{c^j\}}(\mathbf p)=k+k^2=|E^j|$ . So either $\rv_{C^j}(\mathbf p)\le  |E^j|+3$ or $\rv_{C^j}(\mathbf p)\in\{|E^j|+k+2,|E^j|+k+3\}$.}

{In case of $c^j$ purchasing at   price  $k+1$, all consumers in $V^{j1}$ purchase at price $k+1$, giving  $\rv_{C^j}(\mathbf p)=(k+1)(1+k)=k^2+2k+1=|E^j|+k+1$.}

Consider now the case of $c^j$ not purchasing at any price above $k$. Note first it is possible that $c^j$ never purchases at all under $\mathbf p$, and {that} consumers in $V^{j1}$ will never purchase at a price higher than $k+1$. If one and thus all consumers in $V^{j1}$ purchase before $c^j$, then $\rv_{C^j}(\mathbf p)$ is maximized when all consumers in $V^{j1}$ purchase at price $k+1$, saying  $\rv_{C^j}(\mathbf p)\le (k+1)k+3=|E^j|+3$. If none of {the} consumers in $V^{j1}$ {purchases} before $c^j$, then $\rv_{C^j}(\mathbf p)\le \max\{k(1+k),1+k+k^2\}=|E^j|+1$.

 {Hence we see that (i) holds, and $\rv_{C^j}(\mathbf p)\ge |E^j|+k+1$  only if $c^j$ purchases under some price  $p_t\in[k+1,k+3]$ at time $t$. Recalling the time point $\xi$ defined at the beginning of this section, we have $t\ge\xi$, which implies $D^j \setminus B_{\xi-1}(\mathbf p) \neq \emptyset$. So (ii) is  valid.}

 {Suppose now that $D^j \setminus B_{\xi-1}(\mathbf p) \neq \emptyset$. Recall from (\ref{early}) that $c^j$ does not purchase before time $\xi$.  It follows that $\nu_{\xi}(c^j,\mathbf p)\ge k+1$. If $ \mathbf p\cap[k,k+3]=\{k+1\}$, then $p_{\xi}=k+1$. It follows that $c^j$ and all consumers in $V^{j1}$ purchase at price $p_{\xi}=k+1$, yielding (iii).}\end{proof}

 {Note from (\ref{ev1}) and (\ref{ev2}) that for} any $i\in[n]$, both $\max\{d(v): (i,j)\in R,\,v\in V_i^j\} $ and {$\max\{d(v): (i,j)\in \bar R$, $v\in \bar V_i^j\} $ }are upper bounded by $k_i+1$;
thus none of consumers in $ \cup_{j:(i,j)\in R}V_i^j$ and $ \cup_{j:(i,j)\in \bar R}\bar V_i^j$ purchases when the price is above $k_i+1$.
Furthermore, the following two claims can be easily checked {by charging (a part of) revenue obtained at a vertex to a subset of edges incident with it, where each edge receives a charge of 1.}
\begin{clm}\label{c1}For any $(i,j)\in R$, the following hold:
 \begin{mylabel}
 \item  if $\mathbf{p}\cap [k_i-1,k_i+1 ]=\{k_i+1,k_i-1\}$,   then $\rv_{\xi-1,V_i^j}(\mathbf p)=|E_i^j| +1$ and $x_i^j\in B_{\xi-1}(\mathbf p)$;
  \item  if $\mathbf{p}\cap [k_i-1,k_i+1 ]=\{k_i \}$,   then $\rv_{\xi-1,V_i^j}(\mathbf p)=|E_i^j| $ and $x_i^j\not\in B_{\xi-1}(\mathbf p)$.
  \item {if $\mathbf{p}\cap [k_i-1,k_i+1]\in \{\{k_i+1,k_i-1\},\{k_i \}\}$, then $\rv_{V_i^j}(\mathbf p)\le|E_i^j|+1$.}
     \end{mylabel}
     \end{clm}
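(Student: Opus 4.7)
The plan is to exploit two facts established earlier in the construction. First, (\ref{e1}) together with (\ref{avoid}) gives $k_i-1 > k+3$, so every purchase triggered by a price in $[k_i-1,k_i+1]$ must occur strictly before time $\xi$. Second, (\ref{early}) guarantees that $c^j$---the only neighbor of $V_i^j$ outside $V_i^j$---has not bought before $\xi$. Together these let me treat the dynamics on $V_i^j$ as purely internal during the window of interest, and in particular the link $x_i^j c^j$ contributes its full unit to $\nu_t(x_i^j,\mathbf p)$ for every relevant $t$. Reading off (\ref{ev1}), the initial total values on $V_i^j$ are $k_i-1$ for $x_i^j$, $k_i$ for each node of $V_{i1}^j$, $k_i+1$ for each node of $V_{i2}^j$, and $1$ for each leaf in $V_{i3}^j$.

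With this setup I would trace the purchase dynamics case by case. For (i), at price $k_i+1$ only the $(k_i-2)(k_i-1)$ nodes of $V_{i2}^j$ meet the threshold and all buy together; this drops every $V_{i1}^j$ value to $1$ and every leaf value to $0$ while leaving $x_i^j$ intact. Since $k_i\notin\mathbf p$ by hypothesis, the next relevant price is $k_i-1$, at which only $x_i^j$ (value $k_i-1$) buys. The revenue before $\xi$ is then $(k_i+1)(k_i-2)(k_i-1)+(k_i-1) = (k_i-1)(k_i^2-k_i-1) = k_i^3-2k_i^2+1 = |E_i^j|+1$, and $x_i^j$ has bought, giving (i). For (ii), at price $k_i$ both $V_{i1}^j$ and $V_{i2}^j$ purchase simultaneously (values $k_i$ and $k_i+1$), contributing $k_i\cdot(k_i-2)k_i = |E_i^j|$ to the revenue; meanwhile $x_i^j$ is left with value $1$ (its $k_i-2$ children have bought but $c^j$ has not) and so does not purchase before $\xi$, giving (ii).

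For (iii) the crux is that after the purchases above, the surviving consumers in $V_i^j$ all have value at most $1$: in case (i), every node of $V_{i1}^j\cup V_{i3}^j$ has value $0$, and in case (ii), only $x_i^j$ (value $1$) and the leaves (value $0$) remain. Hence any later purchase in $V_i^j$ contributes at most $1$ to the revenue, yielding $\rv_{V_i^j}(\mathbf p)\le|E_i^j|+1$ in both cases. The one place that requires care is ruling out interference from prices of $\mathbf p$ outside $[k_i-1,k_i+1]$: prices above $k_i+1$ cannot trigger any purchase in $V_i^j$ (every value is capped at $k_i+1$), and the only integer strictly between $k_i-1$ and $k_i+1$ is $k_i$, which is already handled by the case split. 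This is the only subtle point; once it is verified, the bookkeeping above completes the claim.
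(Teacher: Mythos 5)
Your proof is correct. You verify the purchase dynamics directly and then compute the resulting revenue algebraically; the paper instead proves the same claim by a \emph{charging argument}, distributing the revenue collected at each purchasing vertex over a set of incident edges (each edge receiving charge at most 1) so that after all purchases in $V_i^j$ every edge of $E_i^j$ is charged exactly once and one unit of revenue---corresponding to the link $x_i^jc^j$---is left uncharged. The two approaches trace the same dynamics, but the bookkeeping differs. Your direct computation is more elementary and self-contained, at the cost of having to know (and simplify) $|E_i^j|=k_i^3-2k_i^2$; the paper's charging avoids computing $|E_i^j|$ at all, and the same technique reappears in the proofs of Claims~\ref{cneg} and \ref{c3}, so the paper gets reuse from it. One small technical point: for (iii) you correctly observe that in case (ii) the only surviving consumer in $V_i^j$ with positive value is $x_i^j$ at value $1$, so any post-$\xi$ purchase contributes at most $1$; it is also worth noting, as you implicitly do, that prices strictly between $k_i-1$ and $k_i+1$ other than $k_i$ cannot occur because normal (integer) prices are in play, and prices above $k_i+1$ are harmless since $k_i+1$ is the maximum degree in $V_i^j$---both observations are needed to make the bound airtight.
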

\begin{proof}{To see (i), we consider the time, say $t$, when price $p_t=k_i+1$ is announced, all consumers in $V^j_{i2}$ purchase, and others in $V^j_i\setminus V^j_{i2}$ do not. We charge the revenue $k_i+1$ obtained at each consumer of $V^j_{i2}$ to the $k_i+1$ edges incident with it. Next, at time $t+1$, price $p_{t+1}=k_{i}-1$ is announced, and only $x^j_i$ purchases, because the product value is $k_i-1$ for $x^j_i$, and 1 (resp. 0) for each consumer in $V^j_{i1}$ (resp. $V^j_{i3}$) at that time. Now we charge the $k_i-2$ edges in $E^j$ that are incident with $x^j_i$. So all edges in {$E^j_i$} are charged and 1 revenue is left {(this amount corresponds to the edge that connects $x_i$ and the clause gadget $C^j$)}, which gives (i), as after $x^j_i$'s purchase the product values 0 for all consumers in $V^j_{i1}\cup V^j_{i3}$.}

{To see (ii), {note first that only consumers in $V^j_{i1}\cup V^j_{i2}$ purchase under price $k_i$}. For each $v\in V^j_{i1}$, we charge the $k_i$ edges incident with $v$; for each $v\in V^j_{i2}$, we charge the $k_i$ pendant edges incident with $v$. All edges of {$E^j_i$} have been charged and no revenue is left. Now, the product values 1 for $x^j_i$ and 0 for all consumers in $V^j_{i3}$. Hence (ii) holds.}

{Statement (iii) is straightforward from the proofs of (i) and (ii).}
\end{proof}

\begin{clm}\label{cneg}For any  {$(i,j)\in \bar{R}$}, the following hold:
 \begin{mylabel}
 \item  if $\mathbf{p}\cap [k_i-1,k_i+1 ]=\{k_i+1,k_i-1\}$,   then $\rv_{\xi-1,\bar{V}_i^j}(\mathbf p)=|\bar{E}_i^j| $ and  $\bar x_i^j\not\in B_{\xi-1}(\mathbf p)$;
  \item  if $\mathbf{p}\cap [k_i-1,k_i+1 ]=\{k_i \}$,   then $\rv_{\xi-1,\bar{V}_i^j}(\mathbf p)=|\bar{E}_i^j|+1$ and $\bar x_i^j \in B_{\xi-1}(\mathbf p)$.
   \item {if $\mathbf{p}\cap [k_i-1,k_i+1]\in \{\{k_i+1,k_i-1\},\{k_i \}\}$, then $\rv_{\bar V_i^j}(\mathbf p)\le|\bar E_i^j|+1$.}
     \end{mylabel}
     \end{clm}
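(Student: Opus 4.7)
The argument parallels Claim \ref{c1}, adapted to the simpler structure of $\bar X_i^j$. Assume without loss of generality (by Observation \ref{o3}) that $\mathbf p=(p_1,\ldots,p_\tau)$ is normal, hence strictly decreasing. I first collect two preliminary facts: (a) the product values of consumers in $\bar V_i^j$ never increase and start at $k_i$ for $\bar x_i^j$, $k_i+1$ for each $v\in \bar V_{i1}^j$, and $1$ for each leaf, so no consumer in $\bar V_i^j$ can purchase at any price above $k_i+1$; and (b) the unique neighbor of $\bar V_i^j$ outside $\bar V_i^j$ is $c^j$, whose initial value $k+4$ is strictly below $k_i-1$ by the gap condition (\ref{e1}), so $c^j$ cannot purchase before any price in $[k_i-1,k_i+1]$ is announced. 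Together, (a) and (b) imply that $\bar x_i^j$ has value exactly $k_i$ throughout the whole period before any price in $[k_i-1,k_i+1]$ is announced.

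For (i), let $t_1,t_2$ be the indices satisfying $p_{t_1}=k_i+1$ and $p_{t_2}=k_i-1$; monotonicity of $\mathbf p$ and $k_i\notin \mathbf p$ force $t_2=t_1+1$. By the preliminaries, no consumer in $\bar V_i^j$ has purchased before $t_1$, only the $k_i-1$ nodes of $\bar V_{i1}^j$ purchase at $p_{t_1}$ (contributing $(k_i-1)(k_i+1)=k_i^2-1=|\bar E_i^j|$), and nothing in $\bar V_i^j$ purchases at $p_{t_2}$ because the values have dropped to at most $1$ for $\bar x_i^j$ (since $c^j$ has not purchased) and to $0$ for the leaves. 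Combined with $t_2\le\xi-1$ (from $k_i-1>k+3$), this yields (i). Case (ii) is analogous: at the unique index $t^*$ with $p_{t^*}=k_i$, nobody in $\bar V_i^j$ has purchased before $t^*$, and both $\bar x_i^j$ (value $k_i$) and $\bar V_{i1}^j$ (value $k_i+1$) purchase simultaneously, contributing $k_i\cdot k_i=|\bar E_i^j|+1$; again $t^*\le\xi-1$, yielding (ii).

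For (iii), the residual revenue from $\bar V_i^j$ after the critical step is tiny. In case (i), after time $t_2$ only $\bar x_i^j$ remains unpurchased among non-leaves, with value at most $1$, while every leaf has value $0$; hence at most $1$ additional unit of revenue can be extracted, giving $\rv_{\bar V_i^j}(\mathbf p)\le |\bar E_i^j|+1$. In case (ii), after $t^*$ only leaves remain in $\bar V_i^j$, all with value $0$, so $\rv_{\bar V_i^j}(\mathbf p)=|\bar E_i^j|+1$. The one nontrivial point, and really the only place the construction enters, is the preliminary observation that $\bar x_i^j$ still has value $k_i$ at the time a price in $[k_i-1,k_i+1]$ is announced; this is guaranteed by the gap condition (\ref{e1}), which forces any purchase by $c^j$ to wait for a price at most $k+4<k_i-1$.
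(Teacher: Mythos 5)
Your proposal is correct and establishes the same facts the paper does, but by directly tracking who purchases at each round and tallying revenues rather than by the paper's compact edge-charging argument (charge $k_i+1$ or $k_i$ incident edges per purchasing node in $\bar V_{i1}^j$, and $k_i-1$ edges for $\bar x_i^j$ in case (ii), then observe all of $\bar E_i^j$ gets covered with a surplus of $0$ or $1$). Both routes hinge on the same two structural observations you isolate — values in $\bar V_i^j$ cap at $k_i+1$, and the gap condition (\ref{e1}) keeps $c^j$ unsold while prices are still in $[k_i-1,k_i+1]$ — and your unwinding makes the residual-revenue bound in (iii) slightly more transparent than the paper's "then instant." This is essentially the paper's proof, expanded.
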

\begin{proof} {In proving (i), for each consumer in $\bar V^j_{i1}$, we charge the $k_i+1$ edges incident with it. In proving (ii), for  each consumer in $\bar V^j_{i1}$, we charge the $k_i $ pendant edges incident with it; for $\bar x^j_i$, we charge the $k_i-1$ edges in {$\bar E_i^j$} that are incident with it. Statement (iii) is then instant.}
\end{proof}

{In the rest of this section we discuss the properties of normal pricing sequences.}

\begin{clm} \label{c3}
For any $i\in[n]$, if {$\mathbf p$ is normal and} $\mathbf{p}\cap [k_i-1,k_i+1]\notin \{\{k_i+1,k_i-1\},\{k_i \}\}$, then one of the following holds:
\begin{mylabel}
\item  $\mathbf{p}\cap [k_i,k_i+1 ]=\emptyset$, in which case {$\rv_{V_i^j}(\mathbf p)< |E_i^j|-12m+1 $ for every every $j\in[m]$ with $(i,j)\in R$, and  $\rv_{\bar V_i^j}(\mathbf p)< |\bar E_i^j| -12m+1$ for every every $j\in[m]$ with $(i,j)\in \bar R$.}
 \item  $\mathbf{p}\cap [k_i-1,k_i+1]=\{k_i+1\}$ and  $\mathbf{p}\cap [2,k_i-2]\ne\emptyset$,    {in which case} {$\rv_{V_i^j}(\mathbf p)\le |E_i^j|-6m+5 $} for every $j\in [m]$ with $(i,j)\in R$, and  $\rv_{\bar V_i^j}(\mathbf p)\le|\bar E_i^j|$ for every $j\in [m]$ with $(i,j)\in\bar R$.
 \item  $\mathbf{p}\cap [k_i-1,k_i+1]=\{k_i+1\}$ and
 $\mathbf{p}\cap [1,k_i-2]\subseteq\{1\}$,  {in which case} $\rv_{V_i^j}(\mathbf p)\le |E_i^j|+1 $ for every every $j\in[m]$ with $(i,j)\in R$, and  {$\rv_{\bar V_i^j}(\mathbf p)\le |\bar E_i^j| +1$} for {every} $j\in[m]$ with $(i,j)\in \bar R$. \end{mylabel}
\end{clm}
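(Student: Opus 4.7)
My plan is a case analysis on how the strictly decreasing normal pricing sequence $\mathbf{p}$ intersects the interval $[1,k_i+1]$ relevant to the $i$-th variable gadgets $X_i^j$ ($(i,j)\in R$) and $\bar X_i^j$ ($(i,j)\in \bar R$). In each case I would trace the purchase dynamics round by round, then bound the revenue collected from the two gadgets by summing each node's contribution, using normality (each announced price equals the minimum value of some buyer at that round) to rule out prices that cannot actually appear in $\mathbf{p}$.

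First I set up the dynamics common to all three cases. From the degree profiles (\ref{ev1})--(\ref{ev2}), only nodes in $V_{i2}^j$ and $\bar V_{i1}^j$ have initial value $k_i+1$, so they are the only ones in the $i$-th gadgets that can buy at price $k_i+1$; and if that price is announced then \emph{all} of them buy simultaneously. Immediately afterwards every $V_{i1}^j$ node has only $x_i^j$ as an unbought neighbor (value $\le 1$), $\bar x_i^j$ has only $c^j$ as an unbought neighbor (value $\le 1$), and every leaf in $V_{i3}^j\cup \bar V_{i2}^j$ has value $0$, so from that point on, $V_{i1}^j$, $\bar x_i^j$ and the leaves can contribute only at price $1$. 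In Case~(i), $\mathbf{p}\cap[k_i,k_i+1]=\emptyset$ implies no node of $V_i^j\setminus V_{i3}^j$ or $\bar V_i^j\setminus \bar V_{i2}^j$ ever fetches a price exceeding $k_i-1$; bounding each such node by $k_i-1$ and each leaf by $1$ yields $\rv_{V_i^j}(\mathbf{p})\le (k_i-1)^3$ and $\rv_{\bar V_i^j}(\mathbf{p})\le k_i(k_i-1)$ in the worst sub-case (attained when $k_i-1\in\mathbf{p}$), so (\ref{e1}) together with $k_i\ge 12m$ furnishes the desired strict gap of more than $12m-1$ to $|E_i^j|=k_i^2(k_i-2)$ and $|\bar E_i^j|=k_i^2-1$ respectively.

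For Cases~(ii) and (iii) the $k_i+1$ round contributes exactly $(k_i+1)(k_i-2)(k_i-1)=|E_i^j|-k_i+2$ from $V_i^j$ and $(k_i+1)(k_i-1)=|\bar E_i^j|$ from $\bar V_i^j$. In Case~(iii), $\mathbf{p}\cap[2,k_i-2]=\emptyset$ makes $1$ the only later relevant price, and at that round exactly $V_{i1}^j$, $x_i^j$ and $\bar x_i^j$ buy (each paying $1$), adding $(k_i-2)+1$ to $\rv_{V_i^j}$ and $1$ to $\rv_{\bar V_i^j}$, matching $|E_i^j|+1$ and $|\bar E_i^j|+1$. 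In Case~(ii), the largest price $p^*=\max(\mathbf{p}\cap[2,k_i-2])$ triggers only $x_i^j$ inside $V_i^j$ (value $k_i-1$ or $k_i-2$, while $V_{i1}^j$, $\bar x_i^j$ and leaves have value $\le 1<p^*$), contributing exactly $p^*$. The crucial extra step is to use normality together with the $6m$-spacing (\ref{e1}) to force $p^*\le \max(k+3,k_{i-1}+1)\le k_i-6m+3$, which then yields $\rv_{V_i^j}(\mathbf{p})\le (|E_i^j|-k_i+2)+p^*\le |E_i^j|-6m+5$; the same bound on $p^*$ also implies that by the round of price $1$ the clause node $c^j$ has already bought, so $\bar x_i^j$ drops to value $0$ and never contributes, giving $\rv_{\bar V_i^j}(\mathbf{p})\le |\bar E_i^j|$.

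The hard part is justifying $p^*\le \max(k+3,k_{i-1}+1)$ in Case~(ii). I would enumerate every unbought node of $G$ at the round when $p^*$ is announced and read its current value from the already-executed purchases: the hypothesis $\mathbf{p}\cap[k_i-1,k_i+1]=\{k_i+1\}$ eliminates the only in-$V_i^j$ candidate for a ``value-$p^*$ witness'' (since $x_i^j$ would otherwise have value $k_i-1$, which is not in $\mathbf{p}$), and the $6m$-spacing of the $k_{i'}$'s from (\ref{e1}) ensures that every candidate in another variable gadget has value clustered either at most $k_{i-1}+1$ or at least $k_i+1$; the only remaining candidate is a clause node of value $k+3$, which contributes only if $p^*=k+3$. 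This enumeration, although mechanical, is the most delicate piece of the argument, and it is what drives the precise constants in all three bounds.
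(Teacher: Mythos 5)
Your plan — a case split on $\mathbf p\cap[1,k_i+1]$, tracing the purchase dynamics round by round, and invoking normality together with the $6m$-spacing of the $k_i$'s to pin down the last price at which the literal node can buy — is in essence exactly the paper's strategy. The only structural piece you do not spell out is the exhaustiveness of the three cases: the paper begins by showing that for a normal $\mathbf p$ satisfying the hypothesis one must have $\mathbf p\cap[k_i,k_i+1]\in\{\emptyset,\{k_i+1\}\}$. It does this by showing that $\max(\mathbf p\cap[k_i,k_i+1])=k_i$ would force (via normality, since after such a round every remaining consumer in the relevant gadgets has value $\le k_{i-1}+3$) $k_i-1\notin\mathbf p$ and hence $\mathbf p\cap[k_i-1,k_i+1]=\{k_i\}$, contradicting the hypothesis; the case $p_t=k_i+1$ similarly kills $k_i$ and $k_i-1$. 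You presuppose this trichotomy, so you would need to add that derivation. On the substance of case (ii), the paper arrives at the bound slightly differently — it shows $\mathbf p\cap[k_{i-1}+4,k_i-1]=\emptyset$ and then takes $p^*\le k_{i-1}+3\le k_i-6m+3$ — but this is essentially the same normality argument as your $p^*\le\max(k+3,k_{i-1}+1)$; your warning that this is the delicate piece is accurate, and in fact the paper's terse assertion that after the $k_i+1$ round every remaining consumer has value at most $k_i-1$ glosses over partially-depleted consumers in higher-index gadgets too.

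One genuine inaccuracy worth flagging in your case (i), which is shared by the paper's own proof and is therefore not a point of disagreement with it: you attribute the worst sub-case to $k_i-1\in\mathbf p$ (giving $(k_i-1)^3$), but if the only price $\le k_i+1$ in $\mathbf p$ is $1$, the whole gadget purchases simultaneously at price $1$ for revenue $|V_i^j|=|E_i^j|+1$, which exceeds the claim's asserted bound $|E_i^j|-12m+1$. The paper's bound $(k_i-1)\cdot|V_{i2}^j\cup V_{i1}^j\cup\{x_i^j\}|$ commits the same oversight by silently omitting leaf contributions. Both arguments are rescued only because in the eventual application of Claim~\ref{c3} it is always known that $\mathbf p\cap[2,k_{i_0}-2]\neq\emptyset$, which prevents the degenerate scenario; if you want the claim to hold as stated you would have to add that side condition or separately argue that leaves never contribute.
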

\begin{proof}
If $\mathbf{p}\cap [k_i,k_i+1 ]=\emptyset$, then no consumer in $(\cup_{j:(i,j)\in R}V_i^j)\cup(\cup_{j:(i,j)\in \bar R}\bar V_i^j)$ purchases at a price
higher than $k_i-1$. %{So compared with the price $k_i$, the current revenue decreases by at least 1 from each consumer in  $V^j_{i2}$.}
It follows that {$\rv_{V_i^j}(\mathbf p)\le(k_i-1)|V^j_{i2}\cup V^j_{i1}\cup \{x_i^j\}|$}=$1+ |E_i^j|-|V^j_{i2}|=|E_i^j|-(k_i-2)(k_i-1)+1$ for every   $j\in[m]$ with $(i,j)\in R$ %{(recall Claim \ref{c1} (iii))},
and  {$\rv_{\bar V_i^j}(\mathbf p) \le(k_i-1)|\bar V^j_{i1}\cup \{\bar x_i^j\}|=|\bar E_i^j|-|\bar
V_{i1}^j|=|\bar E_i^j|-(k_i-1)$} for every   $j\in[m]$ with $(i,j)\in \bar R$. %\redcomment{[zhigang: it seems to me that we should plus 1 in the previous inequality.]}
 Now $(k_i-2)(k_i-1)-1> k_i-1\ge12m-1$, {which is implied by (\ref{e1}),}   gives (i).

   {It remains to consider the case where there exist $p_t$ ($1\le t\le\tau$) that is the maximum price in $\mathbf{p}\cap[k_i,k_i+1]\ne\emptyset$. If $p_t=k_i$, then the maximality of $p_t$ together with (\ref{e1}) and (\ref{ev1}) -- (\ref{ec}) implies that all consumers in $V^j_{i1}\cup V^j_{i2}$ with $(i,j)\in R$ and those in $\{\bar x^j_i\}\cup\bar{V}^j_{i1}$ with $(i,j)\in\bar R$ would purchase under price $k_i$ at time $t$. Note from (\ref{es}) that consumer $s$ must have purchased by time $t$. After time $t$ any consumer without the product has value at most $k_{i-1}+3<k_i-1$ (recall (\ref{ev1}) -- (\ref{ec}) and (\ref{e1})). It follows from normality of $\mathbf{p}$ that $k_i-1\not\in\mathbf p$, enforcing $\mathbf{p}\cap [k_i-1,k_i+1]=\{k_i\}$, a contradiction to the condition $\mathbf{p}\cap [k_i-1,k_i+1]\notin \{\{k_i+1,k_i-1\},\{k_i \}\}$ of the claim. Thus  $p_t=k_i+1$, and all consumers in $  V^j_{i2}$ with $(i,j)\in R$ and those in $ \bar{V}^j_{i1}$ with $(i,j)\in\bar R$   purchase under price $k_i+1$ at time $t$, bringing about revenues $|E_i^j|-(k_i-2)$ and $|\bar E_i^j|$, respectively. Notice again that $s$ has purchased by time $t$. After time~$t$, any consumer without the product has value at most $k_i-1$,  which along with the normality of $\mathbf p$ gives $k_i\not\in\mathbf p$}. In turn $\mathbf{p}\cap [k_i-1,k_i+1]\ne\{k_i+1,k_i-1\}$ implies $\mathbf{p}\cap [k_i-1,k_i+1]=\{k_i+1\}$.

   As $k_i-1\not\in\mathbf p$, the normality of $\mathbf p$ enforces $\mathbf p\cap[k_{i-1}+4,k_i-1]=\emptyset$. It follows from (\ref{e1})   {that
   \[\rv_{V_i^j}(\mathbf p)\le |E_i^j|-(k_i-2)+(k_{i-1}+3)\le |E_i^j|-6m+5.\]}

  In case of $\mathbf{p}\cap [2,k_i-2]\ne\emptyset$, before variable nodes $\bar x^j_i$ with $(i,j)\in\bar R$ purchase {(possibly) at price $1$ or 0,} all clause nodes have purchased under some price in $\mathbf{p}\cap [2,k_i-2]$. It follows that all these $\bar x^j_i$ with $(i,j)\in\bar R$  can only purchase at price 0, yielding (ii).

 In case of $\mathbf{p}\cap [2,k_i-2]=\emptyset$, we have $\mathbf{p}\cap [1,k_i-2]\subseteq\{1\}$, implying $\rv_{\bar V_i^j}(\mathbf p)\le |\bar E_i^j| +1$ and hence (iii).

  Due to the above analysis, it can also be observed that the three situations {stated}  in this claim  are all the possible ones. \end{proof}

 Combining Claims \ref{c1}(iii), \ref{cneg}(iii) and \ref{c3} we obtain the following corollary.
 \begin{clm}\label{comb}
{If $\mathbf p$ is normal, then  $\rv_{V_i^j}(\mathbf p)\le|E_i^j|+1$ for all $(i,j)\in R$ and $\rv_{\bar V_i^j}(\mathbf p)\le|\bar E_i^j|+1$ for all $(i,j)\in\bar R$.}\end{clm}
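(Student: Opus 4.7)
The proof is a straightforward case analysis, with no new ideas beyond combining the claims as suggested in the parenthetical preceding the statement. The plan is as follows.

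First, I would fix an arbitrary pair $(i,j)\in R$ (the case $(i,j)\in\bar R$ is symmetric) and split according to whether $\mathbf{p}\cap[k_i-1,k_i+1]$ lies in the ``good'' family $\{\{k_i+1,k_i-1\},\{k_i\}\}$ or not. In the good case, the conclusion $\rv_{V_i^j}(\mathbf p)\le|E_i^j|+1$ is precisely the content of Claim \ref{c1}(iii), and analogously Claim \ref{cneg}(iii) delivers $\rv_{\bar V_i^j}(\mathbf p)\le|\bar E_i^j|+1$ for $(i,j)\in\bar R$, so nothing extra is needed.

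In the complementary case, Claim \ref{c3} applies, providing three exhaustive subcases (i)--(iii). For each subcase I would verify that the bound it supplies is at most $|E_i^j|+1$ (resp. $|\bar E_i^j|+1$). Subcase (i) yields $\rv_{V_i^j}(\mathbf p)<|E_i^j|-12m+1$ and $\rv_{\bar V_i^j}(\mathbf p)<|\bar E_i^j|-12m+1$, both well below the target since $m\ge 5$ by (\ref{avoid}). Subcase (ii) gives $\rv_{V_i^j}(\mathbf p)\le|E_i^j|-6m+5$ and $\rv_{\bar V_i^j}(\mathbf p)\le|\bar E_i^j|$, once more within the target using $m\ge 5$. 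Subcase (iii) states the bounds $|E_i^j|+1$ and $|\bar E_i^j|+1$ verbatim.

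There is essentially no obstacle; the only thing worth double-checking is that the three subcases of Claim \ref{c3} together with the ``good'' case exhaust all possibilities for $\mathbf{p}\cap[k_i-1,k_i+1]$, which is guaranteed by the last sentence of Claim \ref{c3}. Thus the corollary follows by a single case split and a direct appeal to the three already-established claims.
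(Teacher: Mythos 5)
Your proposal is correct and is exactly what the paper intends by ``Combining Claims \ref{c1}(iii), \ref{cneg}(iii) and \ref{c3}'': a case split on whether $\mathbf{p}\cap[k_i-1,k_i+1]$ lies in $\{\{k_i+1,k_i-1\},\{k_i\}\}$, with the good case handled by Claims \ref{c1}(iii)/\ref{cneg}(iii) and the bad case handled by the three subcases of Claim \ref{c3}, each of whose bounds is at most $|E_i^j|+1$ (resp.\ $|\bar E_i^j|+1$). The references to $m\ge5$ in your checks of subcases (i) and (ii) are harmless overkill — the relevant inequalities $-12m+1<1$ and $-6m+5\le 1$ hold for any $m\ge1$ — but the argument is otherwise exactly the one the paper relies on.
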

 \begin{clm} If {$\mathbf p$ is normal and} $\rv(\mathbf p)>|V|$, then $p_1=d(s)$.\label{ds}\end{clm}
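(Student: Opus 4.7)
The plan is to prove the contrapositive: assuming $\mathbf p$ is normal and $p_1\ne d(s)$, show $\rv(\mathbf p)\le |V|$. Normality forces $p_1=\min\{d(v):d(v)\ge p_1,\,v\in V\}$, so $p_1$ must coincide with the degree of some consumer. By (\ref{es})--(\ref{ec}) the set of consumer degrees is $\{1,k+1,k+4,d(s)\}\cup\{k_i-1,k_i,k_i+1:i\in[n]\}$, so excluding $d(s)$ forces $p_1\le k_n+1$.

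The sub-case $p_1=1$ is immediate: since $G$ is a connected tree, every consumer has initial value $\ge 1$, hence $B(p_1)=V$ and $\rv(\mathbf p)=|V|\cdot 1=|V|$.

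For $2\le p_1\le k_n+1$, I would decompose $\rv(\mathbf p)$ across the partition (\ref{node}) of $V$ and bound each summand. Since $d(s)>p_1$, node $s$ purchases at time 1 paying $p_1$; each leaf of the star $S$ then has value $0$ and (under positive prices) contributes nothing, so $\rv_{V_S}(\mathbf p)=p_1\le k_n+1$. Claim~\ref{comb} gives $\sum_{(i,j)\in R}\rv_{V_i^j}(\mathbf p)+\sum_{(i,j)\in\bar R}\rv_{\bar V_i^j}(\mathbf p)\le\sum_R|V_i^j|+\sum_{\bar R}|\bar V_i^j|$. For each clause gadget I would establish an edge-charge bound $\rv_{V^j}(\mathbf p)\le 2|E^j|+4$: each consumer's payment is at most its active degree at purchase time, and aggregating over $V^j$ counts each of the $|E^j|$ internal edges of $C^j$ at most twice and each of the four external edges from $V^j$ (three to literal nodes in $D^j$, one to $s$) at most once. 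Combining these bounds with $|V_S|=9mk_n^3$ and $|V^j|=|E^j|+1=k^2+k+1$ yields
\[\rv(\mathbf p)-|V|\le (k_n+1-9mk_n^3)+m(k^2+k+3).\]

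The main obstacle is confirming that the right-hand side above is non-positive. Invoking (\ref{e1}), which gives $k\ge 6m$ and $k_n\ge k$, both $k_n+1$ and $m(k^2+k+3)$ are $O(mk_n^2)$, hence dominated by $9mk_n^3$ for all $k_n\ge 1$. This yields $\rv(\mathbf p)\le|V|$ in every sub-case with $p_1\ne d(s)$, proving the contrapositive and thus the claim.
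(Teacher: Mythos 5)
Your proof is correct, and it takes a genuinely different route from the paper's. The paper handles the case $2\le p_1\le k_n+1$ with a single global double-count: $s$ pays at most $k_n+1$, the pendant leaves of $S$ pay nothing once $s$ has bought at time~1, and every other consumer pays at most its active degree, so each edge outside $E_S$ is charged at most twice, giving $\rv(\mathbf p)\le(k_n+1)+2(|E|-|E_S|)$; this is then pushed below $d(s)<|V|$ using the construction's size bound on $|V|$. You instead decompose across the gadget partition~(\ref{node}) and bound each block separately, invoking Claim~\ref{comb} (a heavier, gadget-specific lemma resting on Claims~\ref{c1}(iii), \ref{cneg}(iii) and~\ref{c3}) for the variable gadgets and a bespoke edge-charge $\rv_{V^j}(\mathbf p)\le 2|E^j|+4$ for the clause gadgets. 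Both arguments close correctly; the paper's is shorter and self-contained, needing only the trivial ``pay at most active degree'' observation, while yours reuses structural bounds proved elsewhere in the reduction, which makes the final inequality $(k_n+1-9mk_n^3)+m(k^2+k+3)\le 0$ tighter but also means the claim inherits the hypotheses of Claim~\ref{comb} (fortunately, only normality, which you already assume). One minor imprecision: ``dominated by $9mk_n^3$ for all $k_n\ge 1$'' implicitly uses $k\le k_n$ and $m\ge 1$; these do hold by~(\ref{e1}) (indeed $k_n\ge k\ge 6m\ge30$), so the conclusion is fine, but it is the chain $k\le k_n$ rather than $k_n\ge 1$ alone that makes $m(k^2+k+3)=O(mk_n^2)$.
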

\begin{proof}Suppose {to the contrary} that $p_1\ne d(s)$. By normality of $\mathbf p$, we have $p_1<d(s)$, and {furthermore} $p_1\le k_n+1$ {(recalling (\ref{e1}) and (\ref{es})--(\ref{ec})).} It follows that either $p_1=1$, giving $\rv(\mathbf p)=|V|$, or $p_1\ge2$, giving {
\begin{eqnarray*}\rv(\mathbf p)=\rv_{\{s\}}(\mathbf p)+\rv_{V\setminus\{s\}}(\mathbf p)\le (k_n+1)+2(|E|-|E_S|)=
 (k_n+1)+2(|V|\!-\!1\!-\!|E_S|)< 9mk_n^3<d(s)<|V|,\end{eqnarray*} where the third last inequality uses the fact that $|V|<3m(k_n^3)+m(k^2+k+1)+(9mk_n^3)$ and $|E_S|=9mk_n^3-1$}.\end{proof}

  % {[zhigang: I think in the above proof we also need to show that there is no other case, which is not that obvious to the reader.]}

\section{Final proof}
{Having finished all necessary preparations}, we are ready to establish the close relation between 3SAT instance $I$ and the PNC instance on
tree $G$.

\bigskip\noindent{\bf Theorem \ref{hard}.}  (Restated)\quad{\em {{In} the PNC model, computing the  optimal pricing sequence is NP-hard, even when the underlying network is an unweighted tree and all the intrinsic values are zero.}}

\begin{proof}  Let $\opt(G)$ denote the optimal objective value of the {PNC instance on tree $G=(V,E)$}. Define
\begin{equation*}L=|E|+(k-2)m
\end{equation*}
  To establish the NP-hardness of the pricing problem, it suffices to prove that
{\em $\opt(G)\geq L$ if and only if {the 3SAT instance} $ I$  is satisfiable.}

 \bigskip\noindent{\em The ``if'' part.}\quad Suppose that $I$ has a satisfactory truth assignment $\pi$ with $\lambda$ variables assigned ``TRUE" and the remaining $n-\lambda$ variables  assigned
``FALSE". Let $\mathbf{p}=(p_1,p_2,\ldots,p_{2n-\lambda+1},p_{2n-\lambda+2})$ be a solution to {the PNC instance on $G$}  such that
\begin{itemize}
\item $p_1=d(s)$;
\vspace{-2mm}\item There are one or two prices for each variable gadget depending on whether the variable is assigned ``TRUE'' or ``FALSE'' in $\pi$:
 {if $x_{i}$ is assigned ``TRUE" then $ k_i \in \mathbf{p}$},
 if $x_{i}$ is
assigned ``FALSE", then  $\{k_i+1,k_i-1\}\subset\mathbf{p}$;
\vspace{-2mm}\item There is a common price for the $m$ clause gadgets:
  $p_{2n-\lambda+2}=k+1\in \mathbf p$.
\end{itemize}
According to Claims \ref{c1} and  \ref{cneg}, we have
\[\sum_{(i,j)\in R}\rv_{V_i^j}(\mathbf p)+\sum_{(i,j)\in \bar R}\rv_{\bar V_i^j}(\mathbf p)\ge\sum_{(i,j)\in
R}|E_i^j|+\sum_{(i,j)\in \bar R}|\bar E_i^j|.\] Furthermore, the satisfiability implies that $D^j \setminus B_{\xi-1}(\mathbf p) \neq \emptyset$. {Therefore, the condition in Claim  \ref{c4}(iii) holds for every
$j\in [m]$, giving $\rv_{C^j}(\mathbf p)=|E^j|+k+1$} for every $j\in[m]$. It follows from (\ref{node}) that the pricing sequence $\mathbf p$ assures a revenue
\begin{eqnarray*}
\rv(\mathbf p)&=&\rv_{V_S}(\mathbf p)+\sum_{(i,j)\in R}\rv_{V_i^j}(\mathbf p)+\sum_{(i,j)\in \bar R}\rv_{\bar V_i^j}(\mathbf p)+\sum_{j\in[m]}\rv_{C^j}(\mathbf p)\\
&\ge&d(s)+\sum_{(i,j)\in R}|E_i^j|+\sum_{(i,j)\in \bar R}|\bar E_i^j|+\sum_{j\in[m]}(|E^j|+1+k)\\
&=&|E_S|+|E_{cs}|+\sum_{(i,j)\in R}|E_i^j|+\sum_{(i,j)\in \bar R}|\bar E_i^j|+\sum_{j\in[m]}(|E^j|+1+k) \end{eqnarray*}
Now from (\ref{edge}) we derive $\rv(\mathbf p)\ge |E|-|E_{cl}|+m(1+k)=|E|-3m+m(k+1)= L$, proving the {``if''} part.

\bigskip\noindent{\em The ``only if'' part.}\quad  Suppose now $\opt(G)\geq L $. Due to Observation \ref{o3}, there exists a normal pricing sequence $\mathbf{p}=(p_1,p_2,\ldots,p_\tau)$
whose objective {value $\rv(\mathbf p)$} is at least $L$.   As $k\ge5m$, which implies {$L>|E|+1=|V|$},  we derive from Claim \ref{ds} that $p_1=d(s)$, which validates the
subsequent application of Claim \ref{c4}.

 If $\rv_{C^{j_0}}(\mathbf p)<|E^{j_0}|+k+1$ for some $j_0\in [m]$, it can be deduced from Claim \ref{c4}(i) that  $\rv_{C^{j_0}}(\mathbf p)\le|E^{j_0}|+3${.}
Recalling
  (\ref{node}), we derive {from Claims \ref{comb} and \ref{c4}(i) that
   \begin{eqnarray*}
\rv(\mathbf p)&=&\rv_{V_S}(\mathbf p)+\sum_{(i,j)\in R}\rv_{V_i^j}(\mathbf p)+\sum_{(i,j)\in \bar R}\rv_{\bar V_i^j}(\mathbf p)+\sum_{j\in[m]}\rv_{C^j}(\mathbf p)\\
&\le&|E_S|+|E_{cs}|+\sum_{(i,j)\in R}(|E_i^j|+1)+\sum_{(i,j)\in \bar R}(|\bar E_i^j|+1)+\sum_{j\in[m]\setminus\{j_0\}}(|E^j|+k+3)+|E^{j_0}|+3.
\end{eqnarray*}
 Recalling (\ref{edge}), we have
 \begin{eqnarray*}
\rv(\mathbf p)&\le&|E| -|E_{cl}|+(|R|+|\bar R|) +(m-1)(k+3)+3\\
&=&|E|-3m+3m +(k-2)m+5m-k\\
&=&L-(5m-k).
\end{eqnarray*}
Then $k>5m$} implies $\rv(\mathbf p)<L$, a contradiction. Thus for every $j\in[m]$ we have $\rv_{C^j}(\mathbf p)\ge|E^j|+k+1$, {which along with Claim \ref{c4}(ii) implies $D^j \setminus B_{\xi-1}(\mathbf p) \neq \emptyset$ and  $\mathbf{p}\cap
[k+1,k+3]\ne\emptyset$.}

Suppose that there exists $i_0\in[n]$ such that  $\mathbf{p}\cap [k_{i_0}-1,k_{i_0}+1]\notin \{\{k_{i_0}+1,k_{i_0}-1\},\{k_{i_0} \}\}$. Recall {from (\ref{avoid})} that
there exists $j_0\in[m]$ such that $(i_0,j_0)\in R$. Notice from $\mathbf{p}\cap
[k+1,k+3]\ne\emptyset$ that $\mathbf{p}\cap
[2,k_{i_0}-2]\ne\emptyset$, {because $[k+1,k+3]\subseteq [2,k_{i_0}-2]$ as guaranteed by (\ref{e1})}.  If $\mathbf{p}\cap [k_{i_0}-1,k_{i_0}+1]=\{k_{i_0}+1\}$,   {then Claim \ref{c3}{(ii)} implies  that $\rv_{V^{j_0}_{i_0}}(\mathbf p)\le|E^{j_0}_{i_0}|-6m+5$ and further that
 \begin{eqnarray*}
 \rv(\mathbf p)&=&\rv_{V_S}(\mathbf p)+\sum_{(i,j)\in R}\rv_{V_i^j}(\mathbf p)+\sum_{(i,j)\in \bar R}\rv_{\bar V_i^j}(\mathbf p)+\sum_{j\in[m]}\rv_{C^j}(\mathbf p)\\
&\le&|E_S|+|E_{cs}|+\sum_{(i,j)\in R\setminus\{(i_0,j_0)\}}(|E_i^j|+1)+(|E_{i_0}^{j_0}|-6m+5)+\sum_{(i,j)\in \bar R}(|\bar E_i^j|+1)+\sum_{j\in[m] }(|E^j|+k+3) \\
&=&|E| -|E_{cl}|+(|R|+|\bar R|-1)-6m+5 +m(k+3) \\
&=&|E|-3m+(3m-1)  +m(k-2)+5-m\\
&=&L+4-m.
\end{eqnarray*}
Then $m\ge5$ implies a contradiction to $\rv(\mathbf p)\ge L$, which  reduces us to the case $\mathbf p\cap[k_{i_0},k_{i_0}+1]=\emptyset$ and  $\rv_{V^{j_0}_{i_0}}(\mathbf p)<|E^{j_0}_{i_0}|-12m+1$ as stated in  Claim \ref{c3}(i). {Since $|E^{j_0}_{i_0}|-12m+1$ is obviously smaller than $|E^{j_0}_{i_0}|-6m+5$, we still have $\rv(\mathbf p)<L$.}
% \begin{eqnarray*}
%\rv(\mathbf p)&=&\rv_{V_S}(\mathbf p)+\sum_{(i,j)\in R}\rv_{V_i^j}(\mathbf p)+\sum_{(i,j)\in \bar R}\rv_{\bar V_i^j}(\mathbf p)+\sum_{j\in[m]}\rv_{C^j}(\mathbf p)\\
%&<&|E_S|+|E_{cs}|+\sum_{(i,j)\in R}(|E_i^j|+1)-12m+\sum_{(i,j)\in \bar R}(|\bar E_i^j|+1)+\sum_{j\in[m] }(|E^j|+k+3) \\
%&=&|E| -|E_{cl}|+(|R|+|\bar R|)-12m +m(k+3) \\
%&=&|E|-3m+ 3m-7m +m(k-2)\\
%&=&L-7m.
%\end{eqnarray*}
 The contradiction shows that  no such an $i_0\in[n]$ exists, and therefore} the conditions in Claims \ref{c1} and \ref{cneg} {hold}. This enables us to
  construct a truth assignment $\pi$ as follows: for each
$1\leq i\leq n$, if $\mathbf{p}\cap [k_i-1,k_i+1]=\{k_i+1,k_i-1\}$, we assign ``FALSE" to variable $x_i$. Otherwise, that is $\mathbf{p}\cap
[k_i-1,k_i+1]=\{k_i\}$, we assign $x_i$  ``TRUE". As {argued} above, $D^j \setminus B_{\xi-1}(\mathbf p) \neq \emptyset$ for all $j\in[ m]$.
Therefore $\pi$ is indeed a satisfactory truth assignment for $I$. This completes the {``only if"} part and the whole proof of Theorem
\ref{hard}. \end{proof}

\end{document}